		\newtheorem {proposition}{Proposition}	
\newtheorem {lemma}      {Lemma}	    
	\newtheorem {observation} {Observation}
\theoremstyle{definition}
\newtheorem*{definition*}{Definition}
\newtheorem {definition} {Definition}   
\newtheorem*{theorem*}{Theorem}
    	 	\newtheorem{remark}{Remark}      		
\theoremstyle{plain}
\newcommand{\addQEDstyle}[2]{\AtBeginEnvironment{#1}{\pushQED{\qed}\renewcommand{\qedsymbol}{#2}}\AtEndEnvironment{#1}{\popQED}}
\setlist[enumerate]{label=(\alph*),itemsep=.2em,topsep=.3em}
\newcommand\PDF[2]{\dfrac{\partial{#1}}{\partial{#2}}}
\newcommand{\relmiddle}[1]{\mathrel{}\middle#1\mathrel{}}
\newcommand\Mid{\relmiddle{|}}
\newcommand{\argmax}{\operatornamewithlimits{arg\,max}}
\allowdisplaybreaks \usepackage{mathtools}
\crefname{enumi}{}{}
\crefname{enumii}{}{}
\crefname{equation}{}{}
\numberwithin{equation}{section}
\providecommand\FigureNote[1]{\vskip .5em \parbox[c]{\hsize}{\footnotesize \emph{Note:} #1}}
\providecommand{\PDF}[2]{\frac{\partial{#1}}{\partial{#2}}}
\providecommand{\Vt}[1]{\bm{#1}}
\providecommand{\Vte}{\bm{e}}
\providecommand{\Vtm}{\bm{m}}
\providecommand{\Vtx}{\bm{x}}
\providecommand{\Vty}{\bm{y}}
\providecommand{\VtF}{\mathbf{F}}
\providecommand{\VtX}{\mathbf{X}}
\providecommand{\Vttheta}{\bm{\theta}}
\providecommand{\Vtpi}{\bm{\pi}}
\providecommand{\ClE}{\mathcal{E}}
\providecommand{\ClX}{\mathcal{X}}
\providecommand{\ClY}{\mathcal{Y}}
\providecommand{\BbR}{\mathbb{R}}
\providecommand{\BbZ}{\mathbb{Z}}
\providecommand{\Brx}{\bar{x}}
\providecommand{\Bralpha}{\bar{\alpha}}
\providecommand{\Tlpi}{\tilde{\pi}}
\providecommand{\Dtx}{\dot{x}}
\providecommand{\Is}{\coloneq}
\providecommand{\BrVtx}{\bar{\Vtx}}
\providecommand{\TlVtpi}{\tilde{\Vtpi}}
\providecommand{\inX}{\in\ClX}
\providecommand{\inY}{\in\ClY}
\begin{document}

\onehalfspacing 
\title{Most Likely Retail Agglomeration Patterns: Potential Maximization and Stochastic Stability of Spatial Equilibria\thanks{We thank Masashi Suzuki and Shuhei Yamaguchi for their generous research assistance. Minoru Osawa thanks the grant support from JSPS Kakenhi 17H00987, 19K15108, and 22H01610.}} 
\author{Minoru Osawa,\thanks{Corresponding Author. \href{mailto:osawa.minoru.4z@kyoto-u.ac.jp}{osawa.minoru.4z@kyoto-u.ac.jp}; Institute of Economic Research, Kyoto University, Yoshida-honmachi, Sakyo-ku, Kyoto, Kyoto 606-8501}\ \ \ Takashi Akamatsu,\thanks{Corresponding Author. \href{mailto:akamatsu@plan.civil.tohoku.ac.jp}{akamatsu@plan.civil.tohoku.ac.jp}; Graduate School of Information Sciences, Tohoku University, 06 Aramaki-Aoba, Aoba-ku, Sendai, Miyagi 980-8579, Japan.}\ \ \ and Yosuke Kogure\thanks{\href{mailto:kogure@gipc.akita-u.ac.jp}{kogure@gipc.akita-u.ac.jp}; Graduate School of Engineering Science, Akita University, 1-1 Tegatagakuen-machi, Akita City 010-8502, Japan.}}
\date{\today}
\maketitle
\begin{abstract}
\noindent 
We study a model of retail agglomeration where consumers are more likely to visit zones with a higher concentration of shops. This agglomerative effect makes zones with many retailers more attractive. The spatial distribution of retailers in equilibrium is endogenously determined in response to the spatial pattern of shopping demand. In such a setting, multiple locally stable equilibria may arise, and the outcome can depend on the initial distribution of shops. To address this issue, we apply an approach from evolutionary game theory, selecting the equilibrium that maximizes a potential function representing the incentives of retailers. We demonstrate the method in a two-dimensional spatial setting. Compared to local stability based on gradual, myopic adjustments, this global maximization leads to a unique and more robust prediction. As expected, the number of retail clusters decreases either when shopping costs for immobile consumers fall or when the attractiveness of larger retail concentrations increases. 
\end{abstract}

\medskip
\noindent\textbf{Keywords:} 
retail agglomeration; 
spatial interaction;  
multiple equilibria;
local stability; 
stochastic stability. 

\medskip
\noindent\textbf{JEL Classification:} C62, R12, R13, R14

\clearpage

\section{Introduction}
\label{sec:introduction}

The key features of cities include activities at locations, spatial interaction patterns between these locations, and the spatial distribution of various factors that facilitate these activities and interactions. 
\cite{Wilson-RSI2007} describes a general framework called the Boltzmann--Lotka--Volterra (BLV) method to model such systems. The BLV method is a synthesis of \emph{fast dynamics} (the ``Boltzmann'' component) and \emph{slow dynamics} (the ``Lotka--Volterra'' component). 
Fast dynamics determine the spatial interaction flows between locations in the short run (e.g., trade flows, commuting flows) based on the entropy-maximizing framework \citep{wilson1967statistical}. 
The spatial distribution of mobile actors (e.g., firms, households) is treated as fixed in the short run, and governs the generation and attraction of such flows. 
The spatial interaction flows determine the short-run payoff landscape for mobile actors (e.g., profits for firms, utilities for households). 
Slow dynamics then describe the gradual adjustments of the spatial distribution of mobile actors by considering their relocation. 
This approach to combine short- and long-run dynamics was followed in the so-called ``new economic geography'' \citep{Fujita-Krugman-Venables-Book1999,Baldwin-Book2016} and paved the way for quantitative spatial models in economics \citep[see][for a survey]{Redding-Rossi-Hansberg-ARE2017}. 

The first instance of the BLV method is the \cite{Harris-Wilson-EPA1978} (HW) model, a pioneering work in modeling the spontaneous formation of retail agglomerations in an urban area. Based on the static shopping models of \cite{huff1963probabilistic} and \cite{lakshmanan1965retail}, HW formulated a spatial model with agglomeration and dispersion forces. 
Retail firms tend to agglomerate in fewer locations because consumers are attracted to larger concentrations of retailers. 
They may also prefer to disperse spatially to be closer to consumers and to avoid competition from other retail agglomerations. 
HW demonstrated that such a model can exhibit multiple equilibria, path dependence, and catastrophic phase transitions. 
Although their study focused on retail agglomerations in urban areas as an application, the HW model (and more broadly, the BLV methodology) has since been applied to a wider range of fields, including logistics \citep{leonardi1981unifying1,leonardi1981unifying2}, archaeology \citep{bevan2013models,paliou2016evolving}, healthcare \citep{tang2017flow}, and crime \citep{davies2013mathematical,baudains2016dynamic} to name a few \citep[see][for a survey]{dearden2015explorations,Wilson-NETS2024}. 

In light of the diverse applications, it is essential to have a solid understanding of the analytical aspects of the HW model.
One key issue with the HW model is the multiplicity of equilibria.
The early explorations of the analytical properties of the model mainly focused on two-location settings for tractability \citep{clarke1981note,rijk1983equilibrium,rijk1983uniqueness}. However, these studies had already suggested that multiple locally stable equilibria could arise.
Thereafter, extensive numerical simulations of the HW model in multi-location settings have demonstrated that numerous locally stable equilibria can arise for each set of parameter values \citep{clarke1983dynamics,clarke1985dynamics,wilson2011phase,dearden2015explorations}.
Furthermore, using an analytical method developed by \cite{Akamatsu-Takayama-Ikeda-JEDC2012}, \cite{Osawa-et-al-JRS2017} formally demonstrated that the model allows multiple locally stable states in multi-location settings.
This implies that the model's predictions can be elusive and raises questions about the robustness of numerical findings in the literature. 

To address this issue, we introduce a new approach that enables the unambiguous prediction of the most likely spatial configuration in the HW model. 
We employ the results from the theory of \emph{potential games} \citep{monderer1996potential,Sandholm-JET2001,Sandholm-JET2009}. 
We first establish that the HW model is a potential game, that is, its structure can be represented by a single function called the \emph{potential function} that assigns a scalar for each possible state (the spatial distribution of retailers). 

In potential games, various analytical tools are available for the characterization of the properties of equilibria. 
First, the set of Nash equilibria of a potential game coincides with that of Karush--Kuhn--Tucker (KKT) points of the maximization of the potential function. 
Second, the set of local maximizers of the potential function is \emph{locally stable} under various standard dynamics.\footnote{\label{foot:dynamics}For example, local potential maximizers are known to be locally stable under 
the \emph{best response dynamic} \citep{Gilboa-Matsui-ECTA1991}, 
the \emph{Brown--von Neumann--Nash dynamic} 
\citep{Brown-vonNeumann-Rand1950,Nash-AM1951}, 
the \emph{Smith dynamic} \citep{smith1984stability}, 
and \emph{Riemannian game dynamics} \citep{Mertikopoulos-Sandholm-JET2018} such as the replicator dynamics \citep{Taylor-Jonker-MB1978} often employed in new economic geography.  
See \cite{Sandholm-JET2001} and \cite{Sandholm-Book2010}, Section 8.2.}  
Third, global maximizers of the potential function are \emph{globally stable} in multiple senses. 
The set of global potential maximizers in a potential game is selected as ``stochastically stable'' states \citep[][Section 12.2]{Sandholm-Book2010}, roughly meaning that such states are most likely to persist in the long run when small random perturbations can occur. 
Another approach is dynamic optimization, where perfect foresight dynamics select the global maximizer(s) of the potential function \citep{Oyama-RSUE2009,oyama2009history}. 

In particular, by global maximization of the potential function, we can determine the most likely spatial agglomeration patterns at each given value of the structural parameters of the model. This contrasts with the local stability approach in the literature, which often leaves multiple stable equilibria. 
To demonstrate the effectiveness of this approach, we analyze several stylized examples (two zone city and a two-dimensional city). 
We show that, in the most likely spatial configuration, the number of retail agglomerations decreases either when shopping costs for consumers decrease or when the strength of agglomerative effects increases. 
These results corroborate numerical findings in the literature. 

\section{Related literature}
\label{sec:literature}

\cite{Harris-Wilson-EPA1978} showed that their model reduces to a maximization problem of a scalar-valued function with respect to consumers' shopping patterns and spatial distribution of retail firms, and interpreted it as a welfare maximization problem of a central planner. 
We instead interpret the HW model as a large-population potential game \citep{Sandholm-JET2001,Sandholm-JET2009}, 
which, in turn, allows us to employ the theory of potential games to analyze the model.  
As we have discussed above and will discuss in \cref{sec:potential_game}, we employ (global) potential maximization as an equilibrium refinement criterion.  
The approach is motivated by \cite{Sandholm-Book2010}'s \emph{stochastic stability} method (Sections 11 and 12) that considers the limiting behaviors of probability distributions over the set of possible spatial patterns of retailers. 
Specifically, Sandholm considered the ``stationary distribution'' of stochastic evolutionary process, which describes the probability of a spatial pattern to occur in the long run. 
In a potential game, the stationary distribution assigns a higher probability at a spatial pattern that achieve a higher potential function value. 
When stochasticity diminishes, the distribution concentrates on the set of global maximizers of the potential function \citep{Blume-GEB1993,blume1997population}, and such states are considered to be \emph{stochastically stable}. 

One successful application of stochastic stability and potential games in the urban spatial context is \cite{schelling1971dynamic}'s model of segregation, in which a finite number of agents choose their locations on a discrete grid taking neighbors' characteristics into account. 
Schelling's studies demonstrate that a small microscopic homophily can lead to macroscopic separation of two groups of people. 
By considering several specific functional forms for an individual's utility function, \cite{zhang2004dynamic,zhang2004residential,zhang2011tipping} showed that potential functions can be used to characterize the equilibria of the model. 
Building on Zhang's work, \cite{Grauwin-etal-JPE2012} formulated Schelling's model as a spatial evolutionary game and provided a general analysis using a potential game method. 
\cite{zhang2004dynamic,zhang2004residential,zhang2011tipping} and \cite{Grauwin-etal-JPE2012} considered games with a finite number of agents and studied the limiting behavior of the stationary distribution when stochasticity diminishes. 

Some studies have focused on the behavior of the stochastic differential equation (SDE) based on the HW model. 
\cite{vorst1985stochastic} considered an SDE version of the HW model and defined a different type of stochastic stability concept, showing that equilibrium in the model is globally absorbing if it is unique. 
Our study instead focuses on the cases in which the model features multiple equilibria. 
More recently, \cite{Ellam-etal-PRSA2018} proposed an approach based on a SDE formulation for the HW model, and provided a Bayesian method for parameter estimation. 
They also exploit a potential function associated with their version of the HW model. The stationary distribution associated with their SDE is then represented by the potential function, and forms an integral part of their parameter estimation procedure. 
Our study is different from theirs in that we focus on the game-theoretic interpretation of the HW model, and we aim to demonstrate the effectiveness of potential maximization as equilibrium refinement for deterministic spatial models using the HW model as a concrete example. 
Also, it is noted that the theoretical foundation behind our potential maximization approach (i.e., stochastic stability) lies in considering individual noises on the side of retailers choices. 
However, stochasticity in \cite{Ellam-etal-PRSA2018} arises from aggregate fluctuations. 
The synthesis of the two approaches is an interesting avenue for future research. 

We contribute to the literature on potential game methods in spatial economic models.
One strand of this literature analyzes the formation of central business districts as a result of agents' social preferences for proximity to others, as originally proposed by \citet{beckmann1976spatial} and later revisited by \citet{Mossay-Picard-JET2011}.
By generalizing the framework of \citet{Mossay-Picard-JET2011} using Beckmann-type social externalities, \citet{Blanchet-et-al-IER2016} developed a variational (potential maximization) formulation for a broad class of urban spatial models with a continuum of agents in continuous space. 
Characterizing the stability of equilibria in such continuous-space models is challenging, although some attempts have been made in this direction \citep[e.g.,][]{Bragard-Mossay-CSF2016}. 
An alternative approach is to consider discrete-space versions of Beckmann-type models, as in \citet{Akamatsu-JME2017}, who employed tools from finite-strategy potential games with continuum players \citep{Sandholm-Book2010}. 
Along similar lines, \citet{Osawa-Akamatsu-2020} showed that the seminal model of multiple business district formation by \citet{Fujita-Ogawa-RSUE1982} can be interpreted as a potential game when reformulated in discrete space, where global maximization of the potential function serves as a powerful analytical tool.
The discrete-space approach also offers a tractable strategy for analyzing continuous-space models, including extensions of Schelling-type frameworks with a continuum of agents and continuous space, such as the one proposed by \citet{Mossay-Picard-JRS2019}.
A complementary direction is to develop a general theory of large-population potential games with continuous strategy sets, as pursued by \citet{Cheung-Lahkar-GEB2018, Lahkar-Riedel-GEB2015, Cheung-JET2014, Cheung-GEB2016}.
As an application of such theories in urban economics, \citet{Takayama-ET2020} examined a monocentric city model with bottleneck congestion and characterized its equilibria using a potential maximization approach.
We expect that potential game methods will continue to provide fruitful insights into urban and regional economic problems.

\section{The model}
\label{sec:model}

In the following, $\BbZ$ denotes the set of integers. 
$\BbR$ and $\BbR_+$ denote the set of reals and nonnegative reals, respectively. 
For $n\in\BbZ$, we define $[n] \Is \{1, 2, \hdots, n\}$. 

Consider a city comprising $K\in\BbZ$ discrete zones. 
Let $[K]$ denote the set of zones. 
There is a large continuum of retailers that can enter or exit any zone in the city. 
The mass of retailers in the city is endogenously determined in equilibrium.  
The spatial distribution of retailers is denoted by $\Vtx = (x_i)_{i\in[K]}$, where $x_i \ge 0$ is the mass of retailers in zone $i\in[K]$. 
We call $i\in[K]$ with $x_i > 0$ a \textit{retail agglomeration}. 
The spatial distribution of the retailers, $\Vtx$, is the endogenous variable of the model. 

There is a continuum of consumers whose spatial distribution is exogenously given. 
Each infinitesimal consumer purchases a single unit of goods sold by retailers. 
In aggregate, consumers' shopping behavior is modeled by a set of origin-constrained gravity equations, which is originally derived from the entropy-maximization principle \citep{wilson1967statistical}. 
The value spent in zone $i$ by consumers in zone $j$ is given as 
\begin{align}
    V_{ji}(\Vtx) = 
        \frac
            {x_i^\alpha \exp\left( - \beta t_{ji}\right)}
            {\sum_{k\in[K]} x_k^\alpha \exp\left( - \beta t_{jk}\right)}
        Q_j, 
\end{align}
where the fixed constant $Q_j > 0$ denotes the total demand (or the mass of consumers) in zone $j$,   
$t_{ji}$ is the generalized travel cost from zone $j$ to $i$, and 
$\beta > 0$ represents the rate at which demand decreases in distance. 
Also, in HW's terminology, $x_i^\alpha$ is the \textit{attractiveness} of zone $i$ for consumers, and $\alpha > 1$ is the elasticity of attractiveness. Greater $\alpha$ implies stronger agglomeration effects. 

The total revenue of zone $i$ is equally distributed among the active retailers therein. 
Each active retailer in zone $i$ incurs a constant cost $\kappa_i > 0$ to operate. 
The profit of a retailer in zone $i$, which is a function of $\Vtx$, is then given by 
\begin{align}
    \pi_i(\Vtx) 
        & 
            \Is 
            \frac{1}{x_i} \sum_{j \in[K]} V_{ji} (\Vtx) - \kappa_i
            = 
            \sum_{j \in[K]}
            \frac  
                { x_i^{ \alpha - 1 } \exp\left( - \beta t_{ji}\right)}
                {\sum_{k\in[K]} x_k^\alpha \exp\left( - \beta t_{jk}\right)}
            Q_j
            -
            \kappa_i. 
\end{align}
It is noted that $\alpha > 1$ ensures that $\pi_i$ is well-defined for all $\Vtx \ge \Vt0$. 

Equilibrium spatial distribution of retailers is determined by their entry--exit behavior. 
Retailers may not enter the city if they do not ern a nonnegative profit. 
Let $\pi_0(\Vtx)$ be the payoff of the outside option for retailers (i.e., not entering the city) and let $\pi_0(\Vtx) = 0$ for any $\Vtx$. 
Retailers enter zone $i$ if $\pi_i(\Vtx) > \pi_0 = 0$, exit if $\pi_i(\Vtx) < \pi_0 = 0$. 
In equilibrium, retailers in all zones achieve zero profit: $x_i > 0$ implies $\pi_i(\Vtx) = 0$, and $\pi_i(\Vtx) = 0$ implies $x_i \ge 0$. 
Also, there is no incentive for retailers to enter zones without retailers because $x_i = 0$ implies $\pi_i(\Vtx) = - \kappa_i <\pi_0 =  0$. 
To sum up, spatial equilibrium can be defined as follows. 
\begin{definition}
    \label{def:equilibrium}
    A spatial distribution of retailers $\Vtx\ge\Vt0$ is a \emph{spatial equilibrium} if it satisfies the following condition: 
    \begin{align}
        & x_i \pi_i(\Vtx) = 0,\ x_i \ge 0,\ \pi_i(\Vtx) \le 0 
            & \forall i\in[K]. 
        \label{eq:equilibrium}
    \end{align}
\end{definition}

\begin{lemma}
    \label{lem:X}
Any spatial equilibrium must lie in the following closed and convex set:
\begin{align}
    \ClX 
        \Is 
        \left\{
            \Vtx \in \BbR_+^K 
        \Mid
            \sum_{i\in[K]} \kappa_i x_i = Q
        \right\},
    \label{eq:X}
\end{align}
where $Q \Is \sum_{i\in[K]} Q_i$ is the total mass of consumers or the total retail demand in the city. 
\end{lemma}
\begin{proof}
We have $\sum_{i\in[K]} x_i \pi_i(\Vtx) = 0$ at any spatial equilibrium, implying $\sum_{i\in[K]} Q_i = \sum_{i\in[K]} \kappa_i x_i$. 
\end{proof}

The left-hand side of \cref{eq:X} is the retailers' total revenue, whereas the right-hand side is their total cost in the city. 
We assume that there is a sufficiently large pool of possible entrants, so that the total mass of retailers equals the total demand at any spatial equilibrium. 

\section{Multiplicity of locally stable equilibria}
\label{sec:multiplicity}

Following the literature, we study the comparative statics of spatial equilibrium with respect to the structural parameters of the model. 
In particular, we focus on the roles of $\alpha$ and $\beta$. 

It is known that the HW model can have numerous spatial equilibria. 
For example, \cite{rijk1983uniqueness} showed that there are \emph{at least} $\binom{K}{K/2} + 1$ spatial equilibria when $K$ is even. 
To obtain relevant outcomes among multiple equilibria, the literature focuses on locally stable equilibria under the following natural deterministic dynamics: 
\begin{align}
    & \Dtx_i 
            = x_i \pi_i(\Vtx) 
            = 
            \sum_{j \in[K]}
            \frac  
                { x_i^{ \alpha } \exp\left( - \beta \ell_{ji}\right)}
                {\sum_{k\in[K]} x_k^\alpha \exp\left( - \beta \ell_{jk}\right)}
            Q_j
            -
            \kappa x_i
    & \forall i\in[K]. 
    \label{eq:D}
    \tag{D}
\end{align}
The dynamic is consistent with the equilibrium condition \cref{eq:equilibrium} in that any spatial equilibrium is a stationary point of \cref{eq:D}. 
In the terminology of evolutionary game theory \citep[as surveyed by][]{Sandholm-Book2010}, the dynamic \cref{eq:D} is a special case of the replicator dynamic \citep{Taylor-Jonker-MB1978} where the average payoff is always zero because $\pi_0 = 0$.

Assuming \cref{eq:D}, we can focus on the states in $\ClX$ since $\ClX$ defined by \cref{eq:X} is globally attracting on $\BbR_+^K$. 
In fact, for any $\Vtx \ge \Vt0$, 
\begin{align}
    \sum_{i\in[K]} \Dtx_i = \sum_{i\in[K]} x_i \pi_i(\Vtx) = Q - \sum_{i\in[K]} \kappa_i x_i, 
\end{align} 
so that the total mass of retailers strictly increases if $Q > \sum_{i\in[K]} \kappa_i x_i$ and strictly decreases if $Q < \sum_{i\in[K]} \kappa_i x_i$. 
Thus, $\sum_{i\in[K]} \kappa_i x_i = Q$ at any stationary point of \cref{eq:D}. 

\begin{figure}[tb]
    \centering
    \begin{subfigure}[b]{.16\hsize}
        \centering
        \includegraphics[width=2.1cm]{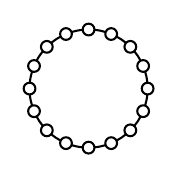}
        \caption{$K = 2^4$ circle} 
    \end{subfigure}
    \begin{subfigure}[b]{.16\hsize}
        \centering
        \includegraphics[width=2.1cm]{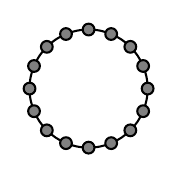}
        \caption{$\Vtx^{(0)}$} 
    \end{subfigure}
    \begin{subfigure}[b]{.16\hsize}
        \centering
        \includegraphics[width=2.1cm]{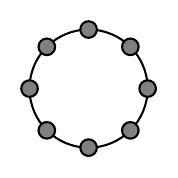}
        \caption{$\Vtx^{(1)}$} 
    \end{subfigure}
    \begin{subfigure}[b]{.16\hsize}
        \centering
        \includegraphics[width=2.1cm]{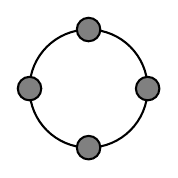}
        \caption{$\Vtx^{(2)}$} 
    \end{subfigure}
    \begin{subfigure}[b]{.16\hsize}
        \centering
        \includegraphics[width=2.1cm]{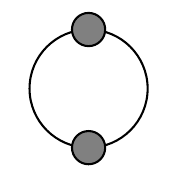}
        \caption{$\Vtx^{(3)}$} 
    \end{subfigure}
    \begin{subfigure}[b]{.16\hsize}
        \centering
        \includegraphics[width=2.1cm]{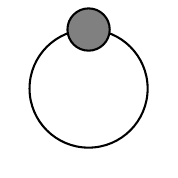}
        \caption{$\Vtx^{(4)}$} 
    \end{subfigure}
    \caption{Circular economy and symmetric spatial configurations.}
    \FigureNote{Gray disks indicate the masses of retailers in each zone.} 
    \label{fig:REs}
\end{figure} 
Equilibrium refinement based on local stability under \cref{eq:D} can leave numerous equilibria as locally stable states. 
In particular, the mono-centric concentration of retailers in any single zone is always locally stable under \cref{eq:D} if $\alpha > 1$ (see \cref{app:proofs} for the omitted proofs). 
\begin{proposition} 
\label{prop:monnopolar}
If $\alpha > 1$, a full concentration of retailers in any single zone, namely $x_i = Q/\kappa_i$ and $x_j = 0$ ($j\ne i$) for some $i\in[K]$, is a locally stable spatial equilibrium under \cref{eq:D}. 
\end{proposition}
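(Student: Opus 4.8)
The plan is to first verify that the proposed state is a spatial equilibrium and then to establish its local asymptotic stability under \prettyref{eq:D} by exhibiting a strict Lyapunov function on the invariant simplex $\ClX$. Write $\Vtx^\ast$ for the candidate state, with $x^\ast_i = 1$ and $x^\ast_l = 0$ for $l \ne i$. For the equilibrium check I would substitute $\Vtx^\ast$ into $\pi_i$: since only zone $i$ carries mass, every denominator $\sum_{k} x_k^\alpha \phi_{jk}$ collapses to $\phi_{ji}$, so that $\pi_i(\Vtx^\ast) = \sum_{j\in[K]} Q_j - \kappa = Q - \kappa = 0$ after the normalization $Q/\kappa = 1$. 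For any empty zone $l \ne i$, the numerator carries the factor $x_l^{\alpha-1}$, which vanishes at $x_l = 0$ precisely because $\alpha > 1$; hence $\pi_l(\Vtx^\ast) = -\kappa < 0$. Together with $x^\ast_i \pi_i(\Vtx^\ast) = 0$ this verifies the complementarity condition \prettyref{eq:equilibrium}, so $\Vtx^\ast$ is a spatial equilibrium.

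For stability, recall from the preceding discussion that $\ClX$ is invariant and globally attracting under \prettyref{eq:D}, so it suffices to argue within $\ClX$. The key observation, and the step where $\alpha > 1$ is essential, is that the per-retailer profit of an empty zone is strictly negative near $\Vtx^\ast$. Concretely, I would show that $\pi_l$ is continuous at $\Vtx^\ast$ with $\pi_l(\Vtx^\ast) = -\kappa$: the denominators $\sum_k x_k^\alpha \phi_{jk}$ stay bounded away from zero in a neighborhood of $\Vtx^\ast$ (being close to $\phi_{ji} > 0$), while the factor $x_l^{\alpha-1}$ tends to $0$ as $x_l \to 0^+$ because $\alpha - 1 > 0$. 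By continuity there is a relative neighborhood $\ClU \subset \ClX$ of $\Vtx^\ast$ on which $\pi_l(\Vtx) < 0$ for every $l \ne i$ simultaneously.

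I would then take the candidate Lyapunov function $V(\Vtx) \Is 1 - x_i = \sum_{l \ne i} x_l$, which is nonnegative on $\ClX$ and vanishes only at $\Vtx^\ast$. Along \prettyref{eq:D}, $\dot V = \sum_{l\ne i} \Dtx_l = \sum_{l \ne i} x_l \pi_l(\Vtx)$, and on $\ClU$ each summand is nonpositive and strictly negative whenever $x_l > 0$; hence $\dot V < 0$ on $\ClU \setminus \{\Vtx^\ast\}$. This makes $V$ a strict Lyapunov function and yields local asymptotic stability. Equivalently, one may linearize: the factor $x_l^\alpha$ in $E_l$ kills every partial derivative $\partial E_l/\partial x_m$ at $\Vtx^\ast$ except $\partial E_l/\partial x_l = -\kappa$, so the Jacobian of the reduced dynamics in the chart $x_i = 1 - \sum_{l\ne i} x_l$ is $-\kappa I$, a hyperbolic sink.

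I expect the main obstacle to be handling the boundary nature of the equilibrium rather than any deep computation: the vertex sits on the face $\{x_l = 0\}$, so I must ensure that the perturbation stays inside $\ClX$ and that the quantities $x_l^{\alpha-1}$ and the fractions defining $\pi_l$ are well behaved as $x_l \downarrow 0$. This is exactly where the hypothesis $\alpha > 1$ does the work: allowing $\alpha \le 1$ would leave $x_l^{\alpha-1}$ bounded away from zero (or divergent), so an empty zone could become profitable to enter and the Lyapunov estimate would fail.
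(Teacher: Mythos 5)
Your argument is correct, but it takes a genuinely different route from the paper's. The paper's proof is two lines: it verifies exactly the strictness you compute, namely $\pi_i(\Vtx^*) = 0 > -\kappa = \pi_j(\Vtx^*)$ for $j \ne i$ when $\alpha > 1$, and then simply invokes the general result (citing Sandholm and co-authors on local stability in population games) that a \emph{strict equilibrium} is locally stable under a wide class of evolutionary dynamics including \prettyref{eq:D}. You instead make the stability step self-contained: after the same equilibrium verification, you build the explicit Lyapunov function $V(\Vtx) = 1 - x_i$ on the invariant simplex, use continuity of $\pi_l$ at the vertex (which indeed hinges on $x_l^{\alpha-1} \to 0$, i.e.\ on $\alpha > 1$) to get a relative neighborhood where all empty zones have strictly negative profit, and conclude $\dot V = \sum_{l\ne i} x_l\pi_l(\Vtx) < 0$ off the vertex; your linearization giving the reduced Jacobian $-\kappa I$ is also correct. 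What the paper's approach buys is brevity and generality across dynamics; what yours buys is a proof that needs no external citation and makes transparent exactly where $\alpha>1$ enters (both in killing $\pi_l$ at the boundary and in making $x_l^\alpha$ differentiable with vanishing derivative at $x_l=0$). Both are valid; yours is tied to the specific dynamic \prettyref{eq:D}, whereas the cited result covers the broader family of dynamics the paper mentions in \prettyref{sec:potential_game}.
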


That is, there is at least as many locally stable equilibria as the number of zones at any level of transport costs between locations. 
Further, the following result demonstrates that spatial patterns with more than two retail agglomerations can become locally stable simultaneously. 

\begin{proposition}
Let $\kappa_i = \kappa$ for all $i\in[K]$ and also assume that $\frac{Q}{\kappa} = 1$, so that $\ClX$ is the $(K-1)$-simplex. 
Suppose $\alpha > 1$. 
Consider a one-dimensional symmetric circular economy, where $\ell_{ij} = \min\{|i - j|, K - |i - j|\}$ and $\kappa_i = 1$ for all $i\in[K]$. 
Assume $K = 2^J$ with $J \ge 3$. 
If $\alpha$ is sufficiently small and $\beta$ is sufficiently large, all spatial patterns of the form 
\begin{align}
    \Vtx^{(k)} \Is (\underbrace{\underbrace{2^k \Brx, 0, 0,\hdots,0}_{\text{$2^k$ elements}}, 
        \underbrace{2^k \Brx, 0, 0,\hdots,0}_{\text{$2^k$ elements}}, 
        \hdots,
        \underbrace{2^k \Brx, 0, 0,\hdots,0}_{\text{$2^k$ elements}}}_{\text{repeated $K/2^k = 2^{J - k}$ times}})
\end{align}
with $0 \le k \le J$ and $\Brx = \frac{1}{K}$, up to symmetry, are locally stable spatial equilibrium under \cref{eq:D}.
\end{proposition}
\begin{proof}
See Proposition 4 as well as Figure 8 of \cite{Osawa-et-al-JRS2017}. 
\end{proof}
\noindent If $\Vtx^{(k)}$ ($k \ge 2$) is locally stable, then \emph{all} $\Vtx^{(l)}$ ($l\ge k$) are locally stable. 
For the case of $K = 2^4 = 16$, \cref{fig:REs} shows the circular economy and spatial patterns $\{\Vtx^{(k)}\}_{1 \le k \le J}$. 
Furthermore, there can be other spatial patterns that are locally stable under \cref{eq:D}.

\section{Potential and stability} 
\label{sec:potential_game}

The equilibrium refinement based on local stability can thus leave multiple equilibria. 
To overcome this issue, this section introduces a different approach based on potential game theory. 

\subsection{The HW model as a large-population potential game}

We observe that the HW model is a \textit{large-population potential game}. 
First, large-population games are defined as follows \citep[see][for a survey]{Sandholm-Book2010}.  
\begin{definition}[Large-population game]
    \label{def:large_population_game}
Consider a game played by a continuum of homogeneous agents. 
Let $[S]$ be the set of available discrete actions, where $S\in\BbZ$ is the number of actions. 
Let $y_i \in [0,1]$ be the share of agents that play action $i\in[S]$ and $\ClY \Is \{\Vty \in\BbR_+^S \mid \sum_{i\in[S]} y_i = 1\}$ be the set of all possible action distributions over the continuum population. 
Let $\VtF$ be the $\BbR^S$-valued Lipschitz continuous payoff function, whose $i$th component $F_i$ maps a state $\Vty\inY$ to payoff $F_i(\Vty)$ for agents choosing action $i\in[S]$ at the state. 
The tuple $([S],\VtF)$ is called a \textit{large-population game}. 
\end{definition}

\begin{definition}
A \emph{Nash equilibrium} of a large-population game $([S],\VtF)$ is a state $\Vty^*\inY$ that satisfies the following condition: 
$y_i^* > 0 \Rightarrow i \in \argmax_{k\in [S]} F_k(\Vty^*)$,
or equivalently, 
\begin{align}
    & y_i^* \left(v^* - F_i(\Vty^*)\right) = 0, y_i^* \ge 0, v^* \ge F_i(\Vty^*) 
    \quad \text{where} \quad v^* \Is \max_{k\in[K]} F_k(\Vty^*).
    \label{eq:NE}
\end{align}
\end{definition}

A large-population game whose payoff function can be characterized by a single scalar-valued function is called a large-population potential game \citep{Sandholm-JET2001,Sandholm-JET2009}. 
\begin{definition}[Large-population potential game]
\label{def:potential_game}
A large population game $([S], \VtF)$ is a \textit{potential game} if there is a scalar-valued function $f$ defined in the neighborhood of $\ClY$ that satisfies $\frac{\partial f(\Vty)}{\partial y_i} = F_i(\Vty)$ for all $i\in[S]$ and $\Vty\inY$. 
\end{definition} 

The next observation follows. 
\begin{observation}
The HW model is a large-population potential game. 
\end{observation}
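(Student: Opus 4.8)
The plan is to prove the Observation constructively: I would exhibit an explicit scalar function $f$ on (a neighborhood of) $\ClX$ and verify directly that its gradient reproduces the retailers' payoffs, i.e.\ $\partial f/\partial x_i = \pi_i(\Vtx)$ for every $i\in[K]$. By the definition of a large-population potential game this single identity is exactly what must be checked, so once the right $f$ is in hand the argument collapses to a differentiation. The form of $f$ is suggested by the logit/entropy origin of the gravity equations: the denominator $\sum_{k\in[K]} x_k^\alpha \phi_{jk}$ that appears in every $V_{ij}$ is a log-sum-type accessibility term, and log-sum expressions are the canonical potentials for logit-based payoffs.

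Concretely, writing $A_j(\Vtx) \DEF \sum_{k\in[K]} x_k^\alpha \phi_{jk}$ for the denominator attached to origin $j$, I would take
\begin{align}
    f(\Vtx) \DEF \frac{1}{\alpha} \sum_{j\in[K]} Q_j \ln A_j(\Vtx) - \kappa \sum_{i\in[K]} x_i .
\end{align}
The verification is then routine: since $\partial A_j/\partial x_i = \alpha\, x_i^{\alpha-1}\phi_{ji}$, the chain rule gives $\partial(\ln A_j)/\partial x_i = \alpha\, x_i^{\alpha-1}\phi_{ji}/A_j$, the factor $\alpha$ cancels the prefactor $\tfrac{1}{\alpha}$, and summing over $j$ yields $\sum_{j\in[K]} Q_j\, x_i^{\alpha-1}\phi_{ji}/A_j$, which is exactly the revenue term of $\pi_i$. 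Differentiating the linear term contributes $-\kappa$, so $\partial f/\partial x_i = \pi_i(\Vtx)$, as required.

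There is essentially no deep obstacle here; the only genuine step is guessing $f$, after which the computation is mechanical. The points deserving a sentence of care are domain-related: the payoffs $\pi_i$ carry the factor $x_i^{\alpha-1}$, which is singular at $x_i = 0$ when $\alpha < 1$, so the gradient identity should be asserted on the open set where $\Vtx$ has positive components (an open neighborhood of the interior of $\ClX$), consistent with the convention in the excerpt that the payoff is defined on an open neighborhood of the simplex. As an independent cross-check one could instead verify the externality-symmetry (integrability) criterion, namely that the Jacobian $D\Vtpi(\Vtx)$ is symmetric, $\partial\pi_i/\partial x_j = \partial\pi_j/\partial x_i$; this holds automatically once $f$ exists, by equality of mixed partials. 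Constructing $f$ explicitly is nonetheless the preferred route, since it is shorter and, more importantly, the stochastic-stability analysis in the sequel needs the global maximizers of $f$ itself rather than merely the existence of a potential.
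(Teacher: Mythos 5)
Your proposal is correct and matches the paper's argument exactly: the paper exhibits the same potential function \prettyref{eq:HW-P} and justifies the Observation by noting $\nabla f(\Vtx) = \Vtpi(\Vtx)$, which is precisely the chain-rule computation you carry out. Your added remark about the singularity of $x_i^{\alpha-1}$ at the boundary when $\alpha<1$ is a sensible extra precaution but does not change the substance.
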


The HW model is a large-population game $(\{0\}\cup [K],\Vtpi)$, where $\{0\}\cup [K]$ is the set of retailers' possible actions including the outside option $0$ (not entering any zone), and the payoff function is $\Vtpi(\Vtx) \Is (\pi_0(\Vtx), \pi_1(\Vtx), \hdots, \pi_K (\Vtx))$ with $\pi_0(\Vtx) = 0$ for all $\Vtx$. 
Also, the equilibrium condition \cref{eq:equilibrium} is equivalent to the Nash equilibrium condition \cref{eq:NE} with $v^* = 0$ ($ = \pi_0(\Vtx) $), and $x_0 = X - \sum_{i\in[K]} \kappa_i x_i > 0$ with $X$ being a sufficiently large total mass of possible entrants. 
That is, $x_0$ is the mass of inactive retailers who are choosing not entering the city.\footnote{The state variable may be redefined to $\Vty = (x_0/X, x_1/X, \hdots, x_K/X)$ to be fully consistent with \cref{def:large_population_game}. Since we will not use such $\Vty$ afterward, with a slight abuse of notation we keep using $\Vtx = (x_i)_{i\in[K]}$ as the state variable.}  
Furthermore, the following function is the potential function for $\Vtpi(\Vtx)$:\footnote{In a different context, \cite{Ellam-etal-PRSA2018} introduces a scalar-valued ``potential function'' associated with a stochastic version of the HW model. \cref{app:ellam} discusses how their potential function is related to ours.} 
\begin{align}
\label{eq:HW-P}
    & 
    f(\Vtx) 
        \Is 
        A(\Vtx)
        -
        \sum_{i\in[K]} \kappa_i x_i. 
        \tag{P},  
\end{align}
where 
\begin{align}
    A(\Vtx) \Is \frac{1}{\alpha} \sum_{j\in[K]} Q_j \log \left(\sum_{k\in[K]} x_k^\alpha \exp\left( - \beta \ell_{jk}\right) \right). 
\end{align}
We can confirm $\frac{\partial f(\Vtx)}{\partial x_i} = \pi_i(\Vtx)$ for all $i\in [K]$ and $\frac{\partial f(\Vtx)}{\partial x_0} = 0 = \pi_0(\Vtx)$, satisfying the definition of a potential function (\cref{def:potential_game}). 

In the potential function $f$, the second term is simply the total cost for retailers. 
The first term, $A$, can be interpreted as a welfare measure for immobile consumers because it is the aggregate accessibility to retail agglomerations \citep{Harris-Wilson-EPA1978,leonardi1978optimum}.  
Concretely, suppose that each consumer in $i\in[K]$ chooses their shopping destination $j\in[K]$ by maximizing the random utility of the form $u_{ij} = \alpha\log x_j - \beta t_{ij} + \epsilon$ with $\epsilon$ being i\@.d\@.d\@. Gumbel random variable.  
Then, $A$ is the city-wide aggregate of a log-sum function commonly used in transport research \citep[see][for a survey]{deJong-etal-TRA2007}. 
For any two equilibria $\Vtx^*$ and $\Vtx^{**}$ in $\ClX$, we have $\sum_{i\in[K]} \kappa_i x_i^* = \sum_{i\in[K]} \kappa_i x_i^{**} = Q$. 
Thus, we have 
$f(\Vtx^*) - f(\Vtx^{**})
    = 
    A(\Vtx^*)
    - 
    A(\Vtx^{**})$, implying the following:  
\begin{observation}
When multiple equilibria exist, the one with the higher potential function value offers greater aggregate accessibility from the perspective of immobile consumers. 
\end{observation}

\begin{remark}
We confirm that $\Vtpi$ satisfies \emph{externality symmetry}, a necessary and sufficient condition for the existence of a potential function as described in \cref{def:potential_game} \citep{Sandholm-JET2001}.\footnote{A less demanding definition of potential games than \cref{def:potential_game} and the associated symmetry requirement for $\Vtpi$ can be found in \cite{Sandholm-JET2009}.} 
Specifically, $\Vtpi$ satisfies 
\begin{align}
    & \PDF{\pi_i(\Vtx)}{x_j} = \PDF{\pi_j(\Vtx)}{x_i} 
    & \forall i,j, \Vtx. 
    \label{eq:symmetry}
\end{align} 
That is, the marginal increase of the payoff of retailer $i$ when the mass of retailers in zone $j$ increases is the same as the marginal increase of the payoff of retailer $j$ when the mass of retailers in zone $i$ increases. 
In fact, if either $i = 0$ or $j = 0$, we confirm that both sides of \cref{eq:symmetry} are zero so the equality holds true. 
For $i,j\in[K]$, we confirm that
\begin{align}
    & \PDF{\pi_i(\Vtx)}{x_j} 
    = 
    \alpha
    \sum_{k\in[K]} 
    S_{ki} 
    S_{kj}
    Q_k   
    = 
    \alpha
    \sum_{k\in[K]} 
    S_{kj}
    S_{ki} 
    Q_k 
    = 
    \PDF{\pi_j(\Vtx)}{x_i}
    & \forall i,j \in[K], \forall \Vtx \inX, 
    \label{eq:payoff-symmetry}
\end{align}
where $S_{ik}$ is the share of a retailer in zone $i$ over the demand from consumers in zone $k$: 
\begin{align}
    S_{ki} \Is \frac{x_i^{\alpha - 1} \exp\left(-\beta \ell_{ki}\right)}{\sum_{l\in[K]} x_l^\alpha \exp\left(-\beta \ell_{kl}\right)}.
\end{align} 
The symmetry condition \cref{eq:payoff-symmetry} indicates that the competition over the shopping demand from each zone $k$ is symmetric between marginal entrant in zones $i$ and $j$, in the sense that the marginal increase in the level of competition is the product of their shares $S_{ik}$ and $S_{jk}$. 
\end{remark}

\subsection{Potential maximization and stochastic stability of equilibria}

\label{sec:how-to-apply-ss}

Knowing that the HW model is a potential game, we can apply the potential maximization method to analyze its equilibria. 
Since all spatial equilibria of the HW model are contained in $\ClX$, the following potential maximization problem characterizes the equilibria of the HW model: 
\begin{align}
    & \max_{\Vtx\inX}.\ f(\Vtx). 
    & \label{eq:potential_maximization}
    \tag{PM}
\end{align} 
That is, the first-order necessary condition for the extrema (the KKT condition) for \cref{eq:potential_maximization} is equivalent to the equilibrium condition \cref{eq:equilibrium} \citep{Sandholm-JET2001,Sandholm-JET2009}. 

For example, if $\alpha \in (0,1)$, $f$ is strictly concave, so that \cref{eq:potential_maximization} has a \emph{unique} global maximizer and equilibrium is unique \citep[cf.][Theorem 1]{vorst1985stochastic}. 
This study instead focuses on the case $\alpha > 1$ where multiple equilibria can exist. 
We can associate the local and global maximization of $f$ with two types of equilibrium refinements.  

Local maximization of the potential function is closely linked with local stability under deterministic dynamics.  
\cite{Sandholm-JET2001} showed that the set of local potential maximizers coincides with that of locally stable states under various myopic evolutionary dynamics including \cref{eq:D} (see \cref{foot:dynamics} on \cpageref{foot:dynamics}). \cref{prop:monnopolar} corresponds to the fact that each corner of $\ClX$, which is the full concentration of retailers in a single zone, is a local maximizer for the problem \cref{eq:potential_maximization} if $\alpha > 1$, and hence locally stable. 
As we have discussed in \cref{sec:multiplicity}, however, the refinement based on local stability can leave multiple equilibria and it is agnostic about which equilibria are ``more relevant.'' 

Focusing instead on the global potential maximizers can obviously provide a stronger means of equilibrium refinement because the set of global maximizers are often a proper subset of that of local maximizers. 
In fact, there are rigorous economic foundations behind global maximization of the potential function. 
One of the most important approaches is the \emph{stochastic stability} method based on stochastic evolutionary dynamics \citep[][Section 12.2]{Sandholm-Book2010}.\footnote{Another representative approach is based on \emph{dynamic optimization}. \cite{Oyama-RSUE2009,oyama2009history} showed that, when instantaneous payoff admits a potential function, ``perfect foresight'' dynamics select the global potential maximizer if the discount rate of future payoffs is sufficiently low.}  

The stochastic stability method considers a stochastic relocation process of firms. 
Stochasticity arises because firms may make suboptimal choices. 
That is, their relocation can occur probabilistically even when the current location yields higher payoff than the alternative location. 
Such a relocation process induces a stochastic dynamics over the set of possible spatial distributions, and it will not converge to any single state. 
In such a situation, we can focus on the long-run probability distribution over the possible states. 
Under certain assumptions, the probability of a spatial distribution $\Vtx$ to occur in the long-run is shown to be proportional to $\exp\left(\eta^{-1} f(\Vtx)\right)$ where $\eta > 0$ is a parameter that governs the level of randomness (frequency of errors). 
For each fixed value of $\eta$, the state with higher potential function value is more likely to occur. 
Furthermore, when errors occur less and less often ($\eta \to 0$), the limiting behavior of the long-run probability distribution offers a method for equilibrium refinement. 
\cite{Sandholm-Book2010} (Section 12.2) shows that the set of global potential maximizers in a potential game are most likely to persist, and call them \emph{stochastically stable} states. 
Below, with this background, we will focus on global potential maximizer(s) to distill the essential implications of the HW model. 
\cref{app:stochastic_stability} provides an introductory summary of the stochastic stability approach.

In principle, the next procedure should be followed to apply the refinement based on global potential maximization.
\begin{enumerate}[label=\textbf{{Step} \arabic*}, leftmargin=4em,topsep=1em]
    \item Fix model parameters $\Vttheta\in\Theta$, where $\Theta$ is the feasible set of the parameters of interest. 
    Enumerate all spatial patterns $\Vtx^{*1}(\Vttheta)$, $\Vtx^{*2}(\Vttheta)$, $\Vtx^{*3}(\Vttheta)$, $\hdots$ that can be local maximizers of the potential function, and let $\ClE(\Vttheta) \Is \{\Vtx^{*1}(\Vttheta),\Vtx^{*2}(\Vttheta),\Vtx^{*3}(\Vttheta),\hdots\} \subset \ClX$. 
    \label{enum:step2}
    \item Given $\Vttheta$, select the global potential maximizer(s) of the potential function $f$ by the comparison of the potential values for the candidate equilibrium patterns in $\ClE(\Vttheta)$.
    \label{enum:step3}
    \item By moving $\Vttheta$ throughout $\Theta$ and repeating the two steps above, obtain the partition of $\Theta$ based on the global potential maximizer. 
    \label{enum:step4}
\end{enumerate}

The main structural parameters of interest are $\alpha$ and $\beta$. 
By definition, the set of local potential maximizers $\ClE(\Vttheta)$ contains all global potential maximizers at $\Vttheta$. 
By exhausting all possible $\Vttheta$ in the parameter space $\Theta$, we can obtain the partition of $\Theta$ based on potential maximization that provides basic insights into the implication of the model.

\section{The two-zone city}

As the simplest illustration, consider a two-zone city ($K = 2$) with symmetric transport: $t_{11} = t_{22} = 0$, $t_{12} = t_{21} = 1$. 
For convenience, we define the ease of consumers' inter-zone travel as follows:
\begin{align}
    \phi \Is \exp\left(-\beta\right).    
\end{align} 
For simplicity, we set $Q/\kappa = 1$, which is inconsequential for the relative size of equilibrium retail agglomerations in two-zone city \citep[][Theorem 6]{rijk1983equilibrium}. 

Consider the symmetric case $Q_1 = Q_2$. 
As the two zones are symmetric, \emph{uniform dispersion} of retailers $\Vtx = \left(\frac{1}{2},\frac{1}{2}\right)$ is always a spatial equilibrium. 
Also, \emph{full concentration} in either zone, $\Vtx = (1,0)$ or $(0,1)$, is always an equilibrium when $\alpha > 1$ (\cref{prop:monnopolar}). 

\begin{figure}[tb]
	\centering
	\begin{subfigure}[b]{.325\hsize}
	    \centering
	    \includegraphics{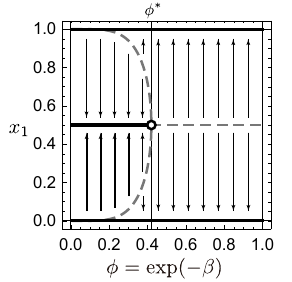}
	    \caption{Local stability} 
	    \label{fig:bif-2_ls}
	\end{subfigure}
	\hfill
	\begin{subfigure}[b]{.325\hsize}
	    \centering
	    \includegraphics{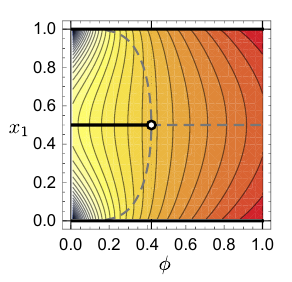}
	    \caption{Contour of $f$} 
	    \label{fig:bif-2_contour}
	\end{subfigure}
	\hfill
	\begin{subfigure}[b]{.325\hsize}
	    \centering
	    \includegraphics{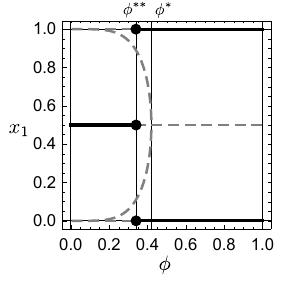}
	    \caption{Potential maximization} 
	    \label{fig:bif-2_ss}
	\end{subfigure}
	\caption{Local stability of equilibria and potential maximization in a two-zone symmetric city.}
    \FigureNote{We set $\alpha = 1.2$ and $\phi\Is\exp\left(-\beta\right)$). 
	Panel (a): The thin arrows indicate direction of adjustment under \cref{eq:D}. The solid black curves indicate locally stable equilibrium values of $x_1$. The dashed gray curves are locally unstable equilibria. 
    Panel (b): Contours of $f$ on $(\phi,x_1)$ space. 
    Red (blue) indicates a higher (lower) value of $f$. 
    Panel (c): Equilibrium refinement by global potential maximization. The black solid curves indicate potential maximizing equilibria.} 
	\label{fig:bif-2}
\end{figure} 

\subsection{Local stability approach}
\cref{fig:bif-2} shows the bifurcation diagram of spatial equilibria along the $\phi$ axis in terms of $x_1$. 
We set $\alpha = 1.2$, noting that the available empirical estimate is $\alpha = 1.18$ \citep{Ellam-etal-PRSA2018}. 
\cref{fig:bif-2_ls} considers local stability of equilibria under \cref{eq:D}. 
The black solid curves show locally stable equilibria, whereas the gray dashed curves represent locally unstable equilibria. 
Symmetric dispersion is stable for small $\phi$ and becomes locally unstable at $\phi^* = \tfrac{1 - \sqrt{\Bralpha}}{1 + \sqrt{\Bralpha}}$ with $\Bralpha = \tfrac{\alpha - 1}{\alpha}$ \citep[][Proposition~1]{Osawa-et-al-JRS2017}. 
Full concentration is locally stable for all $\phi$, in accordance with \cref{prop:monnopolar}.

\Cref{fig:bif-2_contour} shows the contours of $f$ on $(\phi,x_1)$ space. The equilibrium curves in \cref{fig:bif-2_ls} are also shown as reference. 
For each $\phi\in(0,1)$, locally stable (unstable) equilibria are local maximizers (minimizers). 
The paths of spatial equilibria trace the extrema of $f$ in the course of changing $\phi$. 

If $\phi\in(0,\phi^*)$, the local stability approach is agnostic about which equilibrium is more likely. 
However, if $\phi$ is relatively small, \cref{fig:bif-2_ls} suggests that the region of attraction for full concentration is infinitesimally small. 
While full concentration is locally stable technically, a relatively small perturbation is sufficient to nudge the equilibrium toward symmetric dispersion when $\phi$ is small. In this respect, full concentration may be ``less relevant'' if $\phi$ is small, but the local stability approach does not provide a simple means for further equilibrium refinement.

\begin{figure}[tb]
    \centering
	\begin{subfigure}[b]{.4\hsize}
	    \centering
	    \includegraphics{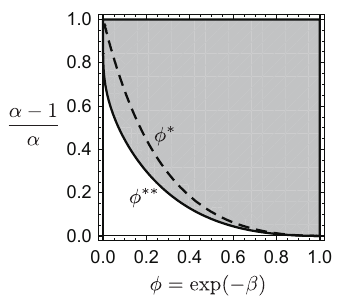}
	    \caption{Full parametric ranges ($\alpha > 1$)} 
	    \label{fig:2_ss-phi}
	\end{subfigure}
	\begin{subfigure}[b]{.4\hsize}
	    \centering
	    \includegraphics{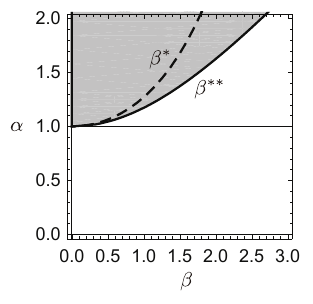}
	    \caption{Original parametric coordinates} 
	    \label{fig:2_ss-beta}
	\end{subfigure}
    \caption{Stability of equilibria in a two-zone symmetric city.} 
    \FigureNote{In Panel (a), a higher $\tfrac{\alpha - 1}{\alpha}$ corresponds to a higher $\alpha > 0$, and a higher $\phi$ corresponds to a lower $\beta$. 
    Full concentration (symmetric dispersion) is the global potential maximizer in the gray (white) region. Under the myopic dynamics \cref{eq:D}, symmetric dispersion is locally stable below the dashed curve, whereas full concentration is always locally stable. For reference, an equivalent partition of the $(\beta,\alpha)$-space is shown in Panel (b), with $\beta^* \Is - \log (\phi^*)$ and $\beta^{**} \Is - \log (\phi^{**})$, including the case $\alpha \le 1$ under which the symmetric dispersion is the unique global potential maximizer of \cref{eq:potential_maximization}.}
    \label{fig:2_ss}
\end{figure}
\subsection{Potential maximization approach}
We now apply the global potential maximization approach. 
As expected from \cref{fig:bif-2_ls}, only uniform dispersion or full concentration can maximize the potential function. 
That is, we can let $\ClE(\phi) = \left\{\left(\frac{1}{2},\frac{1}{2}\right), (1,0), (0,1)\right\}$ for all $\phi$ in \cref{enum:step2}. 
\cref{fig:bif-2_ss} is the bifurcation diagram obtained by global maximization of potential function in $\ClE(\phi)$ at each level of $\phi$ (\cref{enum:step3}) and then varying $\phi$ (\cref{enum:step4}). 
Formally, we can show the following result: 
\begin{proposition}
\label{prop:2-zone}
Consider the symmetric two-zone city where $Q_1 = Q_2 = \frac{1}{2}$. 
Then, the global potential maximizer is: symmetric dispersion if $\phi = \exp(-\beta) \in (0,\phi^{**})$, and full concentration if $\phi \in (\phi^{**}, 1)$, 
where $\phi^{**} \Is \frac{1}{2}(4^\alpha - 2 - \sqrt{4^\alpha(4^\alpha - 4)}) \in (0,1)$. 
\end{proposition}

\cref{fig:2_ss-phi} shows the partition of the parameter space based on \cref{prop:2-zone}. 
For the vertical axis, we use $\Bralpha = \tfrac{\alpha - 1}{\alpha} \in (0,1)$ to cover all $\alpha >  1$.   
The curve between the gray and white regions is $\phi^{**}$ in \cref{prop:2-zone}. 
Potential maximization selects agglomeration (dispersion) in the gray (white) region. 
Agglomeration is selected when $\phi$ is high ($\beta$ is low) and/or $\alpha$ is high. 
For reference, the dashed curve indicates the threshold $\phi^*$ mentioned earlier, above which symmetric dispersion is locally unstable. 
Symmetric dispersion is locally stable below the dashed curve, whereas agglomeration is always locally stable. \cref{fig:bif-2} corresponds to a cross section of \cref{fig:2_ss} when $\alpha = 1.2$. 
For comparison, \Cref{fig:2_ss-beta} shows the corresponding partition of the $(\beta,\alpha)$-space, where the $\alpha \le 1$ case is also shown for reference.

\subsection{Asymmetries}
With perfect symmetry as considered in \cref{prop:2-zone}, global potential maximization has no bite over local stability  for $\phi \ge \phi^*$ because both approaches choose full concentration in either zone $1$ or $2$. 
However, if the zones have asymmetric fundamentals, a sharper prediction is available: 
\begin{proposition}
\label{prop:2-zone-asym}
Consider an asymmetric two-zone city where zone $1$ is more attractive to firms than zone $2$ in terms of local demands, operating cost, or asymmetric accessibility. 
Then, full concentration in zone $2$ can never be a global potential maximizer. 
\end{proposition}

\begin{figure}[tb]
	\centering
	\begin{subfigure}[b]{.325\hsize}
	    \centering
	    \includegraphics{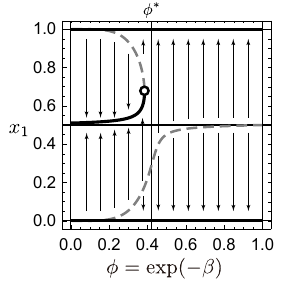}
	    \caption{Local stability} 
	    \label{fig:bif-2_asym_ls}
	\end{subfigure}
	\hfill
	\begin{subfigure}[b]{.325\hsize}
	    \centering
	    \includegraphics{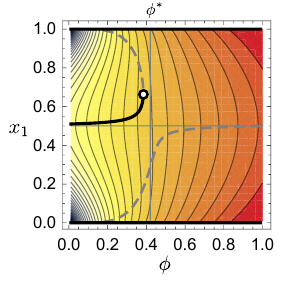}
	    \caption{Contours of $f$} 
	    \label{fig:bif-2_asym_contour}
	\end{subfigure}
	\hfill
	\begin{subfigure}[b]{.325\hsize}
	    \centering
	    \includegraphics{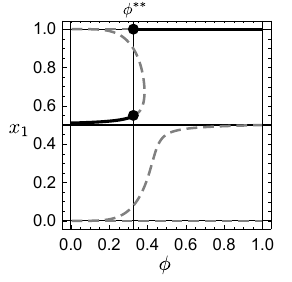}
	    \caption{Potential maximization} 
	    \label{fig:bif-2_asym_ss}
	\end{subfigure}
	\caption{Local stability of equilibria and potential maximization in a two-zone asymmetric city.}  
    \FigureNote{We set $Q_1/\kappa = 0.51 > 0.5$. 
	Panels (a)--(c) corresponds to the panels in \cref{fig:bif-2}. 
    In Panels (a) and (b), the critical value $\phi^*$ for local instability in the symmetric case is shown as a vertical line.} 
	\label{fig:bif-2_asym}
\end{figure} 
\cref{fig:bif-2_asym} shows the bifurcation diagram for an asymmetric case with a demand advantage such that $Q_1 > Q_2$. 
Other parameters are the same as \cref{fig:bif-2}. 
In \cref{fig:bif-2_asym_ls}, as in the symmetric case, full concentration in either zone is always a locally stable equilibrium. 
Reflecting the asymmetry, however, there is a locally stable interior equilibrium path such that $x_1^* > x_2^* > 0$ when $\phi$ is small, corresponding to the symmetric path for $\phi\in(0,\phi^*)$ in \cref{fig:bif-2_ls}. 
The asymmetric equilibrium is locally stable for small $\phi$ and becomes unstable earlier than the critical value $\phi^*$ in the symmetric case.\footnote{Equilibrium properties of asymmetric two-location spatial economic models are extensively studied in \cite{Berliant-Kung-RSUE2009} as well as \cite{Ikeda-etal-JRS2022}.} 
\cref{fig:bif-2_asym_ss} shows the equilibrium refinement by global potential maximization. 
In this example, there appears to be a unique $\phi^{**}\in(0,1)$ such that the global potential maximizer is: an asymmetric interior equilibrium with $x_1^* > x_2^*$ iff $\phi \in (0,\phi^{**})$, and full concentration in zone $1$ iff $\phi \in (\phi^{**}, 1)$. 
However, the analytical expression for $\phi^{**}$ is not available due to nonlinear nature of the model. 

Global maximization of the potential function allows us to focus on plausible equilibria. 
In particular, global potential maximization in asymmetric settings can yiled unambiguous equilibrium selection for almost all parametric values, as illustrated by \cref{fig:bif-2_asym_ss}.

\section{A two-dimensional city}

As a further illustration, this section provides a version of \cref{fig:2_ss} for a symmetric two-dimension geography \`a la central place theory \citep{christaller1933central,losch1940die}. 
To this end, we consider a symmetric square economy with periodic boundary conditions as shown in \cref{fig:se-unit}. 
The black points indicated sequentially numbered zones, and thin lines indicate the transportation network. 
We assume that there are $K = 8 \times 8 = 64$. 
\cref{fig:se-repeat} illustrates the periodic boundary conditions. 
For example, zone $1$ is neighboring not only to zones $2$ and $9$  
but also to zones $8$ and $57$. 
As an example of agglomeration patterns in this economy, \cref{fig:se-ex} shows a 8-centric spatial distribution. 
The gray disks schematically show the size of retail agglomeration.  
In this way, a spatial pattern in a $8\times 8$ lattice can be interpreted as an infinitely repeated pattern over a two-dimensional space. 

We set the transport cost $\ell_{ij} \ge 0$ between locations as the shortest path length between $i$ and $j$. 
For example, $\ell_{1,2} = \frac{1}{8}$, $\ell_{1,9} = \frac{3}{8}$, $\ell_{1,36} = \frac{2}{8}$, and $\ell_{1,28} = \frac{5}{8}$, where we normalize $\ell$ by $8$ so that the square has unit side length. 
Consumer demand is spatially uniform and $Q_j = \frac{Q}{K}$ for all $j\in[K]$. 
These assumptions abstract away all the exogenous advantages induced by the underlying geography, i.e., ``geographical advantage'' of \cite{matsuyama2017geographical}.  

\label{sec:two-dim} 
\begin{figure}
    \centering
    \begin{subfigure}[b]{.32\linewidth}
        \centering
        \includegraphics[width=4cm]{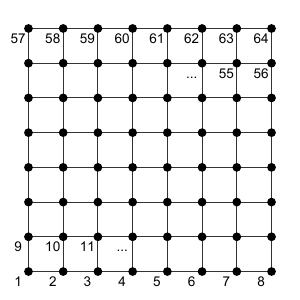}
        \caption{Square economy}
        \label{fig:se-unit}
    \end{subfigure}
    \hfill
    \begin{subfigure}[b]{.32\linewidth}
        \centering
        \includegraphics[width=4cm]{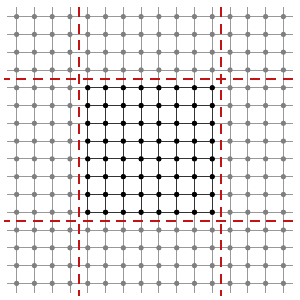}
        \caption{Periodic boundary conditions}
        \label{fig:se-repeat}
    \end{subfigure}
    \hfill
    \begin{subfigure}[b]{.32\linewidth}
        \centering
        \includegraphics[width=4cm]{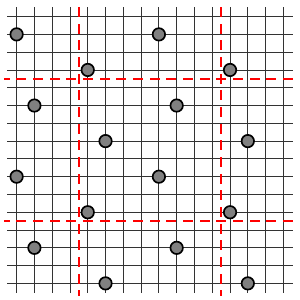}
        \caption{An agglomeration pattern}
        \label{fig:se-ex}
    \end{subfigure}
    \caption{The $8\times8$ square economy with periodic boundary conditions}
    \label{fig:square_lattice}
\end{figure}

\subsection{Invariant equilibria} 

The difficulty of considering such many-zone settings lies in \cref{enum:step2} of the procedure discussed in  \cref{sec:how-to-apply-ss}.  
The enumeration of all equilibria is practically impossible. 
For simplicity, we exclusively focus on \textit{invariant equilibria} \citep{Ikeda-etal-JEDC2018,Ikeda-etal-IJBC2019}. 
Invariant equilibria are a special class of spatial equilibrium patterns in which all retail agglomerations host the same mass of retailers: 
\begin{definition}
A spatial equilibrium $\Vtx^*\inX$ is called an \textit{invariant equilibrium} if $x_i^* = \tfrac{1}{M}$ for all $i\in\operatorname{supp}(\Vtx^*) \Is \{i\in[K]\mid x_i^* > 0\}$, where $M = \left|{\operatorname{supp}(\Vtx^*)}\right|$ is the number of retail agglomerations. 
\end{definition}

For example, $\Vtx = \left(\tfrac{1}{2},\tfrac{1}{2}\right), (1,0), (0,1)$ are the invariant equilibria in the symmetric two-zone city, and exhaust all equilibrium patterns that can be locally stable for this case. 

The procedure we follow in this section is the following: 
\begin{enumerate}[label=\textbf{{Step} \arabic*'}, leftmargin=4em,topsep=1em]
    \item Enumerate all invariant equilibria $\BrVtx^{*1}$, $\BrVtx^{*2}$, $\BrVtx^{*3}$, $\hdots$, and let $\bar{\ClE} \Is \{\BrVtx^{*1},\BrVtx^{*2},\BrVtx^{*3},\hdots\}$. 
    \label{enum:step2m}
    \item At each value of structural parameters $\Vttheta \Is \left(\alpha,\beta\right)$, check whether each invariant equilibrium in $\bar{\ClE}$ locally maximizes the potential function. Select the global maximizer(s) of potential function $f$ among locally potential-maximizing invariant equilibria. 
    \label{enum:step3m}
    \item By moving $\Vttheta$ throughout $\Theta$ and repeating \cref{enum:step3m}, obtain the partition of $\Theta$.
    \label{enum:step4m}
\end{enumerate}

For \cref{enum:step2m}, invariant equilibria in symmetric geographies can be identified using group theory \citep{Ikeda-etal-JEDC2018}. 
In a square economy with $K = n^2$ locations and periodic boundaries, it can be formally shown that $M$ must divide $8K = 8 n^2$. 
Furthermore, invariant equilibria can be enumerated by computational group theory algorithms \citep[e.g.][]{GAP4}. 
A remarkable property of invariant equilibria is that these equilibria remain to be spatial equilibria for all values of structural parameters \citep{Ikeda-etal-JEDC2018}.

\begin{figure}[tb]
    \centering
    \begin{subfigure}[b]{.49\linewidth}
        \centering
        \includegraphics[width=6.3cm]{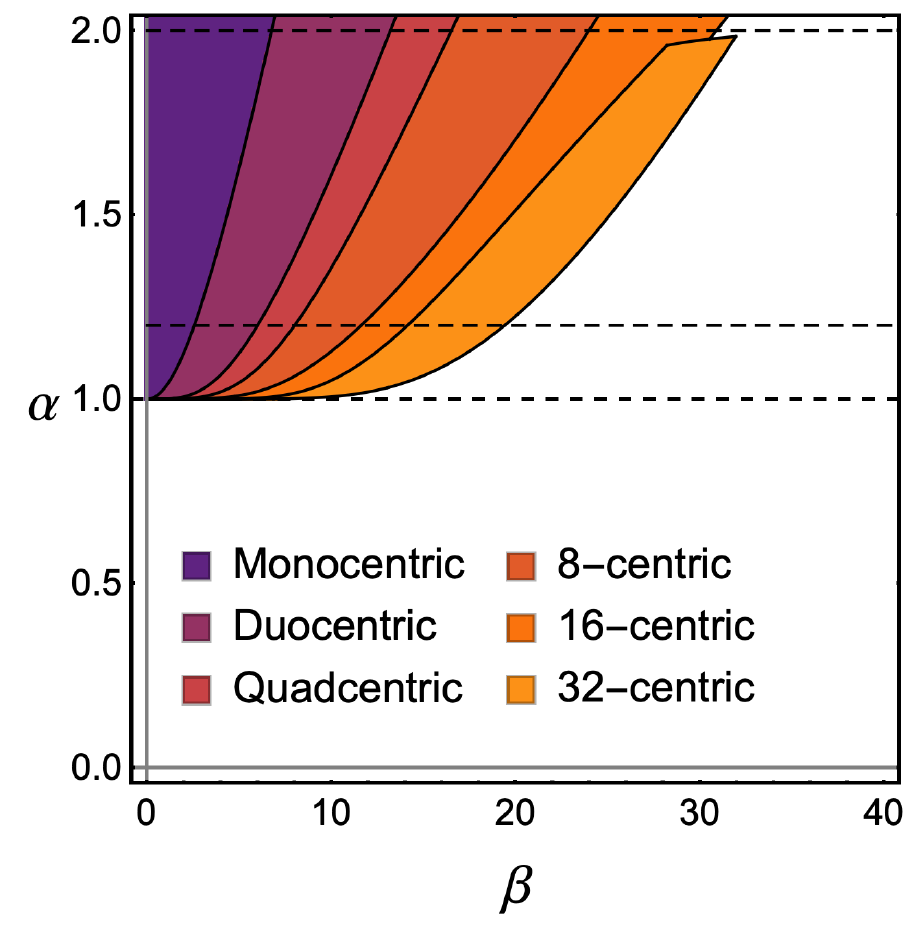}
        \vfill 
        \caption{Partition of the $(\beta,\alpha)$ space}
        \label{fig:sq_ss_partition_8} 
    \end{subfigure}
    \begin{subfigure}[b]{.49\linewidth}
        \centering
        \footnotesize
        \begin{minipage}[c]{.32\linewidth}
            \centering
            \includegraphics[width=1.5cm]{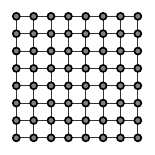}
            \parbox[c]{\linewidth}{\centering Uniform (156)}
        \end{minipage}
        \begin{minipage}[c]{.32\linewidth}
            \centering
            \includegraphics[width=1.5cm]{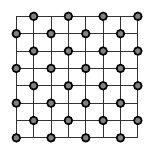}
            \parbox[c]{\linewidth}{\centering 32-centric (155)}
        \end{minipage}
        \begin{minipage}[c]{.32\linewidth}
            \centering
            \includegraphics[width=1.5cm]{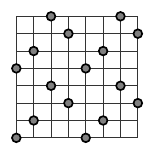} 
            \parbox[c]{\linewidth}{\centering 16-centric (149)}
        \end{minipage}
        
        \vskip 2mm
        
        \begin{minipage}[c]{.32\linewidth}
            \centering
            \includegraphics[width=1.5cm]{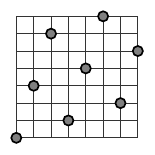}
            \parbox[c]{\linewidth}{\centering 8-centric (119)}
        \end{minipage}
        \begin{minipage}[c]{.32\linewidth}
            \centering
            \includegraphics[width=1.5cm]{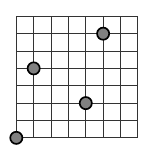} 
            \parbox[c]{\linewidth}{\centering Quadcentric (064)}
        \end{minipage}
        \begin{minipage}[c]{.32\linewidth}
            \centering
            \includegraphics[width=1.5cm]{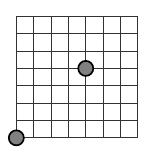}
            \parbox[c]{\linewidth}{\centering Duocentric (015)}
        \end{minipage}
        
        \vskip 2mm
        
        \begin{minipage}[c]{.32\linewidth}
            \centering
            \includegraphics[width=1.5cm]{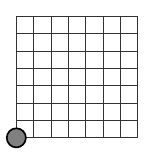}
            \parbox[c]{\linewidth}{\centering Monocentric (001)}
        \end{minipage}
        \begin{minipage}[c]{.32\linewidth}
            \ 
        \end{minipage}
        \begin{minipage}[c]{.32\linewidth}
            \ 
        \end{minipage}
        
        \vskip 1mm
        \caption{Potential-maximizing invariant patterns}
        \label{fig:sq_ss_patterns_8}
    \end{subfigure}
    \caption{Maximization of the potential function among the invariant equilibria.}
    \FigureNote{We consider a square lattice economy with $36$ locations. 
    Panel (a) shows the partition of the parameter space based on potential maximization. Panel (b) shows the associated spatial configurations. 
    The number in the label of each spatial configuration corresponds to \cref{fig:invs_sq_8} in \cref{app:invs}. 
    For the $\alpha \le 1$ case, it is known that the equilibrium is unique and globally maximizes the potential function.}
    \label{fig:sq_ss_8}
\end{figure}

A caveat is that, by definition, $\bar{\ClE}$ does not cover non-invariant equilibria in which there are retail agglomerations of different sizes. 
Therefore, $\bar{\ClE}$ does not exhaust all possible local maximizers. 
However, because there are no practical means to enumerate all non-invariant equilibria, this section resorts to the maximization of potential function over the set of invariant equilibria.

For the $8\times 8$ symmetric square economy this section considers, there are $156$ invariant equilibria. 
\cref{fig:invs_sq_8} in \cref{app:invs} shows the full list. 
All invariant equilibria exhibit geometric symmetry, and every retail agglomeration has the same market share.

\subsection{Potential maximization over invariant equilibria}
By conducting \cref{enum:step3m,enum:step4m} numerically, \cref{fig:sq_ss_8} shows the partition of the parameter space based on potentiam maximization over the set of invariant equilibria. 
Noticeably, only seven among the $156$ invariant equilibria are selected. 
Generally, retailers tend to spatially disperse if either the spatial decay parameter $\beta$ is high or agglomeration force $\alpha$ is low. 
As $\alpha$ increases or $\beta$ decreases, concentration towards a smaller number of locations occurs: the number of retail agglomerations decreases, and the spacing between them increases. 
In particular, as the distance decay rate $\beta$ goes down, we observe the ``spatial period-doubling'' behavior in which the number of retail agglomeration successively halves such that $64 \to 32 \to 16 \to 8 \to 4 \to 2 \to 1$, analogous to \cite{Osawa-et-al-JRS2017}. 
\Cref{app:tri} shows that these observations remain qualitatively valid for a different number of locations ($36 = 6\times 6$) or in a triangular grid economy.

\subsection{Local stability versus potential maximization}
Equilibrium refinement based on potential maximization is sharper than that based on local stability. 
To demonstrate this, \cref{fig:sq_ls-ss} compares the local stability approach and global potential maximization of $f$. 
Again, we limit our attention to invariant equilibria.  
We consider $\alpha = 1.2$ and $\alpha = 2.0$ as indicated in \cref{fig:sq_ss_8} by the horizontal dashed lines. 
In \cref{fig:sq_ls-ss}, the vertical axis corresponds to the $156$ invariant equilibria listed in \cref{fig:invs_sq_8}. 
The lower the index of the spatial pattern, the smaller the number of zones in which retailers locate. 
For instance, as shown in \cref{fig:sq_ss_patterns_8}, pattern $156$ corresponds to the uniform dispersion across the zones, and pattern $001$ corresponds to the full concentration in a zone. 
In \cref{fig:sq_ls-ss}, each gray solid line indicates the range of $\beta$ over which the corresponding invariant equilibrium is a local maximizer of the potential function, which is equivalent to the locally stability of the equilibrium under \cref{eq:D}. 
The black portion of each gray line, if any, indicates that the spatial configuration globally maximizes the potential function among all invariant equilibria for that range of $\beta$.

\begin{figure}[tb]
    \centering
    \begin{subfigure}[b]{.5\linewidth}
        \centering
        \includegraphics[height=6cm]{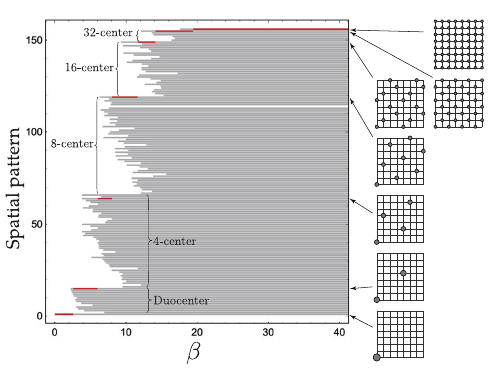}
        \caption{$\alpha = 1.2$}
        \label{fig:sq_ls-ss_12}
    \end{subfigure}
    \begin{subfigure}[b]{.49\linewidth}
        \centering
        \includegraphics[height=6cm]{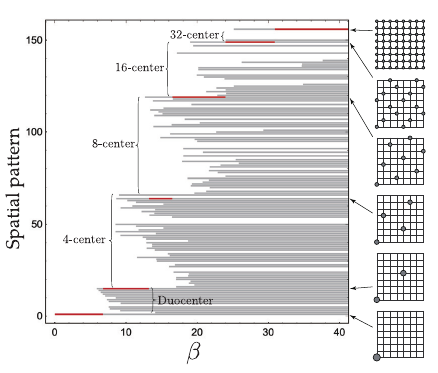}
        \caption{$\alpha = 2.0$}
        \label{fig:sq_ls-ss_20}
    \end{subfigure}
    \caption{Comparison of refinement based on local stability and poetntial maximization.}
    \FigureNote{The vertical axis corresponds to the index of the invariant patterns listed in \cref{fig:invs_sq_8} in \cref{app:invs}. 
    The gray solid lines indicate the range of $\phi$, in which the invariant equilibrium is locally stable under \cref{eq:D}. 
    The red portion on a gray solid line indicates that the equilibrium maximizes the potential value among the invariant equilibria in the range of $\phi$. 
    Panels (a) and (b) correspond to the cross-sections indicated in \cref{fig:sq_ss_partition_8}. 
    In Panel (a), seven patterns can be the global maximizer, whereas six patterns can be seen in Panel (b).
    }
    \label{fig:sq_ls-ss}
\end{figure}

\cref{fig:sq_ls-ss_12} considers the case $\alpha = 1.2$. 
The agglomeration force is relatively weak, and numerous configurations can become locally stable simultaneously. 
In particular, almost all the invariant equilibria are locally stable if $\beta$ is sufficiently large. 
Although retailers tend to agglomerate in a smaller number of locations as $\beta$ decreases, the local stability approach creates ambiguity over which configuration is the most relevant outcome. 
Instead, by considering the global maximization of the potential function, we can single out seven patterns. 

In \cref{fig:sq_ls-ss_20}, we consider the case $\alpha = 2.0$. 
Agglomeration force is so strong that retail agglomerations tend to form in smaller number of locations. 
Some invariant equilibria can never become locally stable due to the strong agglomeration force. 
However, there is still the possibility of a multiplicity of locally stable equilibria. 
Nontheless, global maximization of the potential function provides an unambiguous prediction (up to symmetry) at each level of $\beta$.  
Only six invariant equilibria can globally maximize the potential. 

Compared with the $\alpha = 1.2$ case, $32$-centric pattern is skipped when $\alpha = 2.0$ because this equilibrium is unstable for all $\beta$ for the latter case. 
This behavior is reflected in \cref{fig:sq_ss_8}, where the region for $32$-centric pattern is ``cut short'' for too high $\alpha$. 
This resembles the skipping behavior reported in \cite{Osawa-et-al-JRS2017}. 
However, such behavior may not be empirically plausible, as the available empirical estimate of $\alpha$ is around $1.18$ \citep{Ellam-etal-PRSA2018}. 

In both \cref{fig:sq_ls-ss_12,fig:sq_ls-ss_20}, the entire range of $\beta$ is covered by locally stable invariant equilibria.
It is also worth noting that in the symmetric two-zone city, asymmetric equilibria, whenever they exist, can never be locally stable \citep[][Theorem~5]{rijk1983equilibrium}.
These results suggest that, while technically possible, asymmetric equilibria may be transient patterns that connect one invariant equilibrium to another.
See \cite{Ikeda-etal-JEDC2018,Ikeda-etal-IJBC2019} for further discussion on this point in the context of a ``new economic geography'' model.
Although the HW framework assumes no congestion effects within each zone, if such within-location congestion is significant \citep[as in, e.g.,][]{Helpman-Book1998,Allen-Arkolakis-QJE2014} then non-invariant equilibria may become more relevant.

\section{Concluding remarks}
\label{sec:conclusion}

The Harris and Wilson model is a parsimonious framework for the formation of urban spatial structures. 
This study introduces a new approach for equilibrium refinement based on potential game theory.  
We first observed that the model is a large-population potential game and use (global) potential maximization as a refinement of equilibrium. 
For the asymmetric two-zone setting, the global maximization of a potential function allows unambiguous prediction of the equilibrium spatial structure unlike the local stability approach. 
Our results corroborates with other previous numerical observations in the literature that lowering transport costs promotes concentration of retailers toward smaller number of locations. 

To enumerate candidate spatial configurations in two-dimensional settings, we employed a systematic approach developed by \cite{Ikeda-etal-JEDC2018,Ikeda-etal-IJBC2019} to consider invariant equilibria, the set of equilibria that features geographical symmetry. 
A similar approach to focus on symmetric patterns was adopted by \cite{Osawa-Akamatsu-2020} in the context of an urban economics model. 
The limitation of this approach is that, by construction, asymmetric spatial configurations are abstracted away. 
Further research is needed on this respect. 

The simplicity of the Harris and Wilson framework allows its application in diverse contexts. 
The existence of a potential function enables researchers to develop a unified framework of parameter estimation \citep{Ellam-etal-PRSA2018}. 
Some studies aim to deepen the physics of \cite{Wilson-RSI2007}'s Boltzmann--Lotka--Volterra framework. For example, \cite{Crosato-etal-RSOS2018} considered the thermodynamic efficiency of urban transformation. 
\cite{Slavko-etal-PRE2019} pointed out that an important generalization is to consider the resettlement of consumers in considering the long-run evolution of urban spatial structure. 
This means that there are two types of qualitatively different actors in the model, in contrast to the original HW framework. 
\cite{Osawa-Akamatsu-2020} showed that the potential maximization approach employed in this study could be an effective method of analysis for models with multiple types of agents. 
Further development of the potential game approach for modeling urban spatial structure will be important for both theory and applications.

\appendix
\crefalias{section}{appendix}

\section{Proofs}
\label{app:proofs} 

\begin{proof}[Proof of \cref{prop:monnopolar}]
    If $x_i = Q/\kappa_i$ and $x_j = 0$ ($j\ne i$), then it is a strict Nash equilibrium because $\pi_i(\Vtx) = 0 > -\kappa_j = \pi_j(\Vtx)$ ($j\ne i$) if $\alpha > 1$. It is known that strict equilibria are locally stable under a wide range of dynamics, including \cref{eq:D} \citep{sandholm2014local}. 
\end{proof}

\begin{proof}[Proof of \cref{prop:2-zone}]
    Let $\kappa = Q = 1$ to economise notations, which is inconsequential as $Q/\kappa = 1$. 
    We first list all spatial equilibria. 
    The symmetric equilibrium $\BrVtx = (\frac{1}{2},\frac{1}{2})$ is always a spatial equilibrium. 
    For $\alpha > 1$, $\Vtx^{*1} = (1,0)$ and $\Vtx^{*2} = (0,1)$ are always spatial equilibria. 
    Theorem 3 in \cite{rijk1983uniqueness} shows the following: 
    when $1 < \alpha < \alpha^* \Is \frac{(1 + \phi)^2}{4\phi}$, or equivalently, when $0 < \phi < \phi^* \Is (\sqrt{\alpha} - \sqrt{\alpha - 1})^2$, there are exactly three interior equilibria; 
    other than $\BrVtx$, there are asymmetric equilibria of the form $\Vtx^* = \BrVtx \pm t \cdot (1 , -1)$ with some $t \in (0,\frac{1}{2})$; 
    for $\alpha \ge \alpha^*$, there is a unique interior equilibrium, which is $\BrVtx$.
    From Theorem 5 of \cite{rijk1983equilibrium}, the asymmetric equilibria are always locally unstable, meaning that they are local minima of the potential function. 
    
    That said, only patterns in $\ClE \Is \{\Vtx, \Vtx^{*1}, \Vtx^{*2}\}$ can be local maximizers of $f$. Let $\Vtx(t) \Is (\frac{1}{2} + t, \frac{1}{2} - t)$. 
    With abuse of notation, let $f(t) \Is f(\Vtx(t))$. Then, 
    \begin{align}
        f(t) = 
        \frac{1}{2\alpha}\log 
        \left[\left(\left( \tfrac{1}{2} + t \right)^\alpha + \phi \left( \tfrac{1}{2} - t \right)^\alpha\right) 
        \left(\left( \tfrac{1}{2} - t \right)^\alpha + \phi \left( \tfrac{1}{2} + t \right)^\alpha\right) 
        \right]
        - 1. 
    \end{align} 
    From symmetry, we focus on $t \in [0,\frac{1}{2}]$. 
    We have $f'(\frac{1}{2}) = 1$, showing that $\Vtx^{*1} = \Vtx(\frac{1}{2})$ is a local maximizer of $f$ for any $\phi$ and $\alpha$. 
    Also, $f'(0) = 0$, meaning that $\BrVtx = \Vtx(0)$ is an extremum of $f$. We have
        $f''(0) = 4\alpha \left(\Bralpha - \chi^2\right)$ 
    with $\Bralpha \Is \tfrac{\alpha - 1}{\alpha}$ and $\chi \Is \tfrac{1 - \phi}{1 + \phi}$. The condition $f''(0) < 0$ holds true iff $\phi < \phi^*$, showing $\BrVtx$ is a local maximizer iff $\phi \in (0,\phi^*)$.  

    Let $\Delta(\phi) \Is f(\frac{1}{2}) - f(0) = \frac{1}{2\alpha} \log \frac{2^{2\alpha}\phi}{(1 + \phi)^2}$. 
    For any $\alpha > 1$, $\phi = \phi^{**} \Is \frac{1}{2}(4^\alpha - 2 - \sqrt{4^\alpha(4^\alpha - 4)}) \in (0,1)$ solves the equation $\Delta(\phi) = 0$. 
    We confirm $\Delta'(\phi) = \frac{1 - \phi}{2\alpha\phi(1 + \phi)} > 0$ for all $\phi\in(0,1)$, showing that $f(\frac{1}{2}) > f(0)$ for all $\phi \in (\phi^{**}, 1)$ and $f(\frac{1}{2}) < f(0)$ for all $\phi \in (0,\phi^{**})$. 
    Finally, we confirm $\phi^{**} < \phi^*$ for all $\alpha > 1$, showing that $\BrVtx$ is indeed a local maximizer when $\phi\in(0,\phi^{**})$. 
\end{proof}

\begin{proof}[Proof of \cref{prop:2-zone-asym}]
We consider three important types of exogenous differences. 

\emph{Difference in $Q_k$}. For $\alpha > 1$, $\Vtx^{*1} = (1,0)$ and $\Vtx^{*2} = (0,1)$ are always (locally stable) spatial equilibria even under the asymmetry $Q_1 = \rho \kappa$ and $Q_2 = (1 - \rho)\kappa$ with $\rho > 1/2$. 
However, for the potential values at $\Vtx^{*1}$ and $\Vtx^{*2}$, $f(\Vtx^{*1}) - f(\Vtx^{*2}) = - \kappa(2\rho - 1)\log(\phi) / \alpha > 0$ for any $\phi \in (0,1)$, $\alpha > 1$, and $\rho > 1/2$, indicating that $\Vtx^{*2}$ can never be the global maximizer of $f$. 

\emph{Difference in $\kappa_{i}$}. 
In this case, $\Vtx^{*1} = (Q/\kappa_1, 0)$ and $\Vtx^{*1} = (0, Q/\kappa_2)$. 
As discussed in the main text, the total equilibrium cost always satisfies $\sum_{i} \kappa_i x_i = Q$. 
Thus, only accessibility differences affect potential values. 
We can compute $f(\Vtx^{*1}) - f(\Vtx^{*2}) = 2Q \log (\kappa_2/\kappa_1)$, showing that $f(\Vtx^{*1}) > f(\Vtx^{*2})$ iff $\kappa_1 < \kappa_2$, i.e., if zone $1$ has a cost advantage.

\emph{Difference in transport costs}. Let $\phi_{ij} \Is \exp(-\beta \ell_{ij})$. Suppose $Q_1 = Q_2 = \kappa/2$ and that $\phi_{21} = \rho\phi > \phi = \phi_{12}$ with $\rho > 1$, so that consumers in zone $2$ have better access to retailers in zone $1$ than consumers in zone $1$ has to retailers in zone $2$. 
In other words, firms in zone $2$ are at a disadvantage because consumers from zone $2$ can access zone $1$ more easily than vice versa. Such directional asymmetries in accessibility are common in urban settings—for example, due to one-way streets, time-dependent congestion patterns, or public transit routes with asymmetric coverage or frequency. Geographic features like rivers or hills may also induce such imbalances, affecting firms’ effective market reach and local competition. 
We can verify that $f(\Vtx^{*1}) - f(\Vtx^{*2}) = (Q/\alpha) \log (\rho) > 0$ since $\rho > 1$. 
\end{proof}

\section{Stochastic stability in potential games} 
\label{app:stochastic_stability}

\cite{Sandholm-Book2010}, Sections 11.5 and 12.2 develop a theory under which the \textit{global maximizers} of the potential function are shown to be ``stochastically stable.'' 
Below, we review the essence of his analysis in an accessible manner. 
Our presentation is inevitably brief. For a complete and rigorous treatment we refer to the original text. 
See also \cite{wallace2015stochastic} for a broader survey on stochastic stability approaches in game theory. 

To define the stochastic stability of a state, a stochastic relocation dynamics of retailers must be introduced. 
In doing so, we regard the model in \cref{sec:model} as a continuous (or large-population) and deterministic limit of a discrete and stochastic version of the model. 

Suppose there is a \textit{finite} (but large) number of retailers and let $N\in\BbZ$ be the number of retailers. 
Then, a spatial distribution of finite retailers can be seen as an element of the discrete set $\ClX^N$ defined by 
    $\ClX^N 
    \Is 
    \left\{
        \Vtx 
        \inX
    \Mid 
        N \Vtx \in\BbZ^K
    \right\}$. 
For $\Vtx\inX^N$, we have $x_i \in \{0,\frac{1}{N}, \frac{2}{N}, \hdots , \frac{N - 1}{N}, 1\}$. 
For brevity, we further assume that $\kappa_i = \kappa\ \forall i$ and $Q / \kappa = 1$ so that $\sum_{i\in[K]} x_i = 1$ in the continuous case, thereby $\ClX^N \in \ClX$. 
For simplicity, we abstract from the outside option $i = 0$ because any equilibrium in the continuous case are in $\ClX$ (\cref{lem:X} on \cpageref{lem:X}). 

Every retailer receives action revision opportunities according to a Poisson process with a unit rate. 
When a retailer in zone $i$ receives a revision opportunity at state $\Vtx\inX^N$, it switches from zone $i$ to $j$ according to the logit rule: 
\begin{align}
    \rho_{ij}^N(\Vtx) = \dfrac{\exp\left(\eta^{-1} \pi_{ij}^N(\Vtx)\right)}{\sum_{k\in[K]} \exp\left(\eta^{-1} \pi_{ik}^N(\Vtx)\right)}, 
    \label{eq:logit_rule}
\end{align}
where $\eta > 0$. 
Here, we suppose that firms in the finite-agent game are ``clever'' in the sense that, upon their choice, they evaluate hypothetical payoff \emph{after} their unilateral move \citep[][Section 11.4.2]{Sandholm-Book2010}. That is, $\pi_{ij}^N(\cdot)$ in \cref{eq:logit_rule} is defined as follows: 
\begin{align}
    \pi_{ij}^N(\Vtx) = \pi_j\left(\Vtx + \tfrac{1}{N} (\Vte_j - \Vte_i)\right), 
\end{align}
where $\pi_j(\cdot)$ is the payoff function for the HW model; $\Vte_i$ is the $i$th standard basis in $\BbR^K$, so that $\tfrac{1}{N} (\Vte_j - \Vte_i)$ is the displacement from the current state $\Vtx$ when a firm moves from $i$ to $j$. 
This rule is an instance of ``direct exponential protocols'' (Ibid., Section 11.5.2). 

We have $\rho_{ij}(\Vtx) > \rho_{ik} (\Vtx)$ if $\pi_{ij}^N(\Vtx) > \pi_{ik}^N(\Vtx)$, meaning that retailers prefer locations with higher profit. 
Note that, however, the probability of switching to a lower-profit zone is not zero. 
The parameter $\eta$ can be interpreted as the level of noise in retailers' choice. 
When $\eta \to 0$, every retailer switches to $j$ with the highest profit with probability $1$. 
If $\eta$ is high instead, retailers may choose less profitable zones than their current choice.

These assumptions induce a stochastic dynamic for retailers' spatial distribution.
It is a Markov process $\{\VtX_t^N\}$ on the discrete state space $\ClX^N$ with a jump rate $N$, and the transition probabilities from state $\Vtx\inX^N$ to $\Vty\inX^N$ are as follows. 
\begin{align}
    P_{\Vtx\to\Vty}^N
    = 
    \begin{cases}
        x_i \rho_{ij}(\Vtx) & \text{if } \Vty = \Vtx + \tfrac{1}{N}(\Vte_j - \Vte_i),\ j\ne i
        \\
        \sum_{i\in[K]} x_i \rho_{ii}(\Vtx) & \text{if } \Vty = \Vtx, 
        \\
        0 &\text{otherwise}. 
    \end{cases}
\end{align}
Under this stochastic evolutionary law, the state can move only to neighboring states in $\ClX^N$. 

For each given $N$ and $\eta$, the Markov process $\{\VtX_t^N\}$ admits a unique stationary distribution $\mu^{N,\eta}$ on $\ClX^N$ as follows (Ibid., Theorem 11.5.12):  
\begin{align}
    \mu^{N,\eta}(\Vtx) = \frac{1}{Z} \frac{N!}{\prod_{k\in[K]} (N x_k)!} \exp \left(\eta^{-1} f^N(\Vtx) \right), 
    \label{eq:stationary_distribution}
\end{align}
where $Z > 0$ is the normalizing constant to ensure $\sum_{\Vtx\inX^N} \mu^{N,\eta}(\Vtx) = 1$. 
The function $f^N:\ClX^N \to \BbR$ is a discrete analog for the potential function $f$ for the finite-population case (Ibid., Section 11.5.1), and $\{\tfrac{1}{N} f^N\}$ converges uniformly to $f$ as $N \to \infty$.\footnote{Since the main text focuses on potential maximization in the continuous model, we do not explicitly introduce $f^N$ here. The finite-population potential function is defined to satisfy $f^N(\Vtx) - f^N(\Vtx -\frac{1}{N} \Vte_i) = \pi_i(\Vtx)$ for all $\Vtx\in\ClX^N$ and $i\in[K]$. Such a function can be constructed by a discrete approximation of the coordinate-wise discretized line integral of the continuous payoff function $\Vtpi(\Vtx)$ along the line segment from $(0,0,\hdots,0)$ to $\Vtx$.}

In evolutionary game theory, a state is said to be \textit{stochastically stable} when the stationary distribution of a stochastic dynamic assigns a positive weight on the state in some limits of the structural parameters of the dynamic adjustment process. 
The simplest example is \textit{stochastic stability in the small noise limit}. 
A state $\Vtx^* \in \ClX^N$ is stochastically stable in the small noise limit when 
\begin{align}
    \lim_{\eta \to 0} \mu^{N,\eta} (\Vtx) > 0. 
\end{align}
In the limit $\eta \to 0$, retailers choose zones with higher profit with higher probability. 
The small noise limit is a deterministic limit where noise vanishes and retailers recover optimal choice behavior. 

Small noise limit can be understood with the formula \cref{eq:stationary_distribution}. 
We have
\begin{align}
    \frac {\mu^{N,\eta}(\Vtx)}{\mu^{N,\eta}(\Vty) } 
    = 
    \underbrace{\frac{\prod_{k\in[K]} (N y_k)!}{\prod_{k\in[K]} (N x_k)!}}_{\text{constant in $\eta$}. }
    \exp\left(
    \eta^{-1}
    \left(
        f^N(\Vtx) - f^N(\Vty)
    \right) 
    \right)
\end{align}
for two states $\Vtx,\Vty\inX^N$. 
If $f^N(\Vtx) - f^N(\Vty) > 0$, then the right-hand side grows infinitely large as $\eta \to 0$. 
That is, $\mu^{N,\eta}$ assigns higher and higher probability on the states with larger values of $f^N$ when $\eta$ goes smaller and smaller. 
In the limit, $\mu^{N,\eta}$ concentrates on the states that \textit{globally maximize} $f^N$. 
The global maximizers of $f^N$ are stochastically stable in the small noise limit under a fixed $N$. 

In a similar spirit to the small noise limit, the \textit{double limits} considers a situation where both $N \to \infty$ and $\eta \to 0$. 
By taking these two limits, we recover the large-population game as laid out in \cref{sec:model}, in which retailers do not incur errors, and the set of retailers is a continuum. 
Thus, stochastic stability in double limits provides a refinement procedure for the deterministic large-population model. 

\cite{Sandholm-Book2010} (Corollary 12.2.5) establishes a stochastic stability result for the double limits in potential games under the logit choice rule \cref{eq:logit_rule}. Specifically, it shows that 
\begin{align}
    & \lim_{N \to \infty} \lim_{\eta \to 0} \max_{\Vtx\in\ClX^N}\left| \frac{\eta}{N} \log \mu^{N,\eta}(\Vtx) - \Delta f(\Vtx) \right| = 0 \text{\quad and}\\
    & \lim_{\eta \to 0} \lim_{N \to \infty} \max_{\Vtx\in\ClX^N}\left| \frac{\eta}{N} \log \mu^{N,\eta}(\Vtx) - \Delta f(\Vtx) \right| = 0, 
\end{align}
where $\Delta f(\Vtx) \Is f(\Vtx) - \max_{\Vty\in\ClX} f(\Vty)$ is a translated version of the potential function for the continuous model. 
By definition, we have $\Delta f(x) \le 0$ with equality only at the global maximizers of $f$. 
The characterization is called stochastic stability in the double limits in the \emph{weak} sense (Ibid., Section 12.1.3).
It means that
\begin{align}
    \mu^{N,\eta}(\Vtx) = \exp\left(N \eta^{-1} \Delta f(\Vtx) + o(1)\right)
\end{align}
where $o(1)$ is a term that goes to zero uniformly as $N \to \infty$ and/or $\eta \to 0$. 
Thus, $\Delta f(\Vtx)$ can be seen as the exponential \emph{decay rate} of the probability mass on $\Vtx$. 
For a state such that $\Delta f(\Vtx) < 0$, the probability mass on it must vanish at an exponential rate.  
Then, the stationary distribution must concentrate on the set of global maximizers of the potential function $f$ as $N \to \infty$ and $\eta \to 0$. 
That is, the set of global potential maximizers in a large-population potential game is (weakly) stochastically stable in the double limits.

\section{A modified potential function} 
\label{app:ellam}

The potential function considered in \cite{Ellam-etal-PRSA2018} is the following (their equation (2.9)): 
\begin{align}
    g(\Vtm) 
    = 
    \frac{1}{\alpha} \sum_{j\in[K]} 
    Q_j \log \sum_{i\in[K]} 
    \exp\left(\alpha m_i - \beta t_{ij} \right)
    - 
    \kappa 
    \sum_{j\in[K]} 
    \exp(m_j)
    +
    \delta
    \sum_{j\in[K]} m_j, 
\end{align}
where $\Vtm$ is the log size of retail agglomeration (i.e., $m_i = \log x_i$), $\kappa_i = \kappa$ for all $i\in[K]$, and $\delta > 0$ is a parameter. 
If we rewrite $g$ as a function of $\Vtx$, we see
\begin{align}
    g(\Vtx) 
    = 
    \underbrace{\frac{1}{\alpha} \sum_{j\in[K]} 
    Q_j \log \sum_{i\in[K]} 
    x_i^\alpha \exp\left( - \beta t_{ij} \right)
    - 
    \kappa 
    \sum_{j\in[K]} 
    x_j}_{f(\Vtx) \text{ in \cref{eq:HW-P}}}
    +
    \underbrace{\delta
    \sum_{j\in[K]} \log x_j}_{\text{additional term}}. 
\end{align} 
The potential function is motivated by the original HW framework in \cref{eq:HW-P}, however, the additional term is introduced so that their main stochastic differential equation model has a well-defined stationary distribution. 
As \cite{Ellam-etal-PRSA2018} discusses, the additional term prevents zones from ``collapsing,'' that is, for some $x_i$ to become zero. 

The first interpretation is based on large-population potential games. 
As we have seen, 
\begin{align}
    g(\Vtx) = f(\Vtx) + \delta \sum_{j\in[K]} \log x_j 
\end{align}
where $f$ is the potential function for the original HW model, defined in \cref{eq:HW-P}. 
In fact, if we interpret the modified potential function as the integral of the underlying profit function $\TlVtpi(\Vtx)$, we have 
\begin{align}
    \Tlpi_i(\Vtx) = \PDF{g(\Vtx)}{x_i} = \pi_i(\Vtx) + \frac{\delta}{x_i}. 
\end{align}
Since $\pi_i(\Vtx) \to - \kappa$ when $x_i \to 0$, we see that $\Tlpi_i(\Vtx) \to \infty$ when $x_i \to 0$. 
Thus, at any spatial equilibrium (\cref{def:equilibrium}), every zone should have retailers. 
The additional term $\delta/x_i$ can be interpreted to represent some congestion force. 
Economic foundations for such a term may be land input in firms' production \citep[cf.][in the context of land consumption of households]{Picard-Tabuchi-JET2013}. 
Similar properties can emerge when we consider idiosyncratic preference of consumers in spatial models \citep{Behrens-Murata-JUE2021}.

The second interpretation is an approximation of the constrained maximization problem \cref{eq:potential_maximization} associated with the original HW model. 
If we introduce the ``log barrier'' term $\delta \log x_j$ corresponding to each nonnegativity constraint $x_j \ge 0$, the maximization problem \cref{eq:potential_maximization} becomes
\begin{align}
    \max_{\Vtx} g(\Vtx) \quad \text{s.t.}\quad \sum_{i\in[K]} \kappa_i x_i = Q, 
\end{align}
where the nonnegativity constraints are approximated by the additional term. 

\section{Triangular grid economy}
\label{app:tri}

This appendix compares the square economy with another two-dimensional space, a symmetric triangular grid economy. 
A triangular grid economy with periodic boundaries is important because the hexagonal market area envisaged by central place theory \citep{christaller1933central,losch1940die} can endogenously emerge in this setting \citep{ikeda2014bifurcation}. 
In the context of the HW model, \cite{beaumont1981changing} provided a numerical investigation on a hexagonal economy with a triangular grid. 

To lessen computational burdens, we on a triangular grid with $6\times 6$ locations.  
There are $65$ invariant equilibria, which we list in \cref{app:invs}. 

\Cref{fig:hx_ss} show the partition of the parameter space based on potential maximization. 
To show the entire parametric range, we take the parametrization $\phi \Is \exp(-\beta / 6)$, which shows the whole $\beta\in(0,\infty)$. 
Also, vertical axis is chosen to $\frac{\alpha - 1}{\alpha}$ so that all $\alpha > 1$ can be shown as the $(0,1)$ interval. 
Analogous to \cref{fig:sq_ss_8}, the selected spatial configurations are aligned from the bottom left to the top right according to the decreasing order in terms of the number of retail agglomerations. 

In \cref{fig:hx_ss}, in addition to the uniform distribution and full agglomeration in a zone, there are two representative configurations that occupy relatively large regions in the parameter space: 12-centric and tricentric patterns. 
The former corresponds to Christaller's $k = 3$ system, as $\frac{36}{12} = 3$. Both the patterns feature hexagonal market area considered in central place theory. 
As we compare \cref{fig:hx_ss} and \cref{fig:sq_ss_8}, we observe that the basic implication is robust irrespective of the underying geography. 

\begin{figure}[tb]
    \centering

    \centering
    \begin{subfigure}[b]{.49\linewidth}
        \centering
        \includegraphics{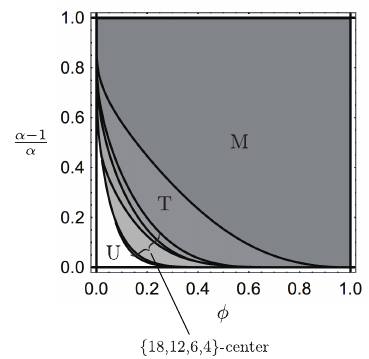}
        \vfill 
        \caption{Partition of the parameter space}
    \end{subfigure}
    \begin{subfigure}[b]{.49\linewidth}
        \centering
        \footnotesize
        \begin{minipage}[c]{.32\linewidth}
            \centering
            \includegraphics[width=2.55cm]{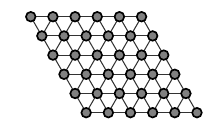}
            \parbox[c]{\linewidth}{\centering Uniform (65)}
        \end{minipage}
        \begin{minipage}[c]{.32\linewidth}
            \centering
            \includegraphics[width=2.55cm]{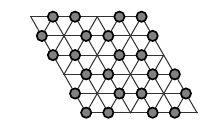}
            \parbox[c]{\linewidth}{\centering 18-centric (63)}
        \end{minipage}
        \begin{minipage}[c]{.32\linewidth}
            \centering
            \includegraphics[width=2.55cm]{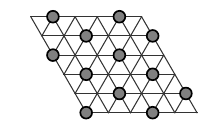} 
            \parbox[c]{\linewidth}{\centering 12-centric (57)}
        \end{minipage}
        
        \vskip 2mm
        
        \begin{minipage}[c]{.32\linewidth}
            \centering
            \includegraphics[width=2.55cm]{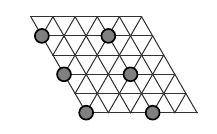}
            \parbox[c]{\linewidth}{\centering 6-centric (33)}
        \end{minipage}
        \begin{minipage}[c]{.32\linewidth}
            \centering
            \includegraphics[width=2.55cm]{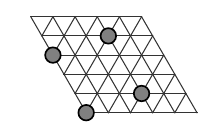}
            \parbox[c]{\linewidth}{\centering Quadcentric (25)}
        \end{minipage}
        \begin{minipage}[c]{.32\linewidth}
            \centering
            \includegraphics[width=2.55cm]{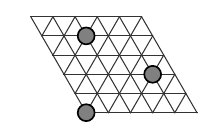} 
            \parbox[c]{\linewidth}{\centering Tricentric (14)}
        \end{minipage}
        
        \vskip 2mm
        
        \begin{minipage}[c]{.32\linewidth}
            \centering
            \includegraphics[width=2.55cm]{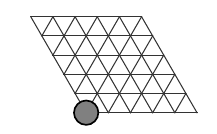}
            \parbox[c]{\linewidth}{\centering Monocentric (01)}
        \end{minipage}
        \begin{minipage}[c]{.32\linewidth}
            \ 
        \end{minipage}
        \begin{minipage}[c]{.32\linewidth}
            \ 
        \end{minipage}
        
        \vskip 1mm
        \caption{Potential-maximizing patterns}
    \end{subfigure}
    \caption{Maximization of the potential function among the invariant equilibria (triangular lattice).}
    \FigureNote{We consider a triangular lattice economy with $36$ locations. Panel (a) shows the partition of the parameter space based on potential maximization among the invariant equilibria. 
    Panel (b) shows the associated spatial configurations. 
    The number in the label of each spatial configuration corresponds to \cref{fig:invs_hx} in \cref{app:invs}. 
    In Panel (a), the letters M and T indicate the mono- and duo-centric equilibria, respectively; U indicates uniform equilibrium; 
    the $\{18,12,6,4\}$-centric equilibrium patterns are sequentially aligned from left to right on the $\phi$ axis.}
    \label{fig:hx_ss}
\end{figure}

\section{Invariant equilibria}
\label{app:invs}

\Cref{fig:invs_sq_8} lists all the invariant equilibria for the $8\times 8$ square economy with periodic boundaries. 
\cref{fig:se-ex} in the main text shows pattern 64 in \cref{fig:invs_sq_8}. 
\cref{fig:invs_hx} shows all the invariant equilibria for the $6\times 6$ triangular tird economy with periodic boundaries. 
We observe that retail agglomerations are symmetrically placed over the geography. 
These invariant equilibria are characterized by the \emph{group} $G$ that represents the symmetry of the economy. 
For example, the $8\times 8$ square geography is invariant (symmetric) under $90^\circ$, $180^\circ$, and $270^\circ$ rotation as well as horizontal and vertical translation. 
The group $G$ is a mathematical object that encapsulates such symmetry.
By exploiting this symmetry, the GAP software \citep{GAP4} was employed to enumerate the invariant equilibria for the square and triangular grid economy considered in this study. 
For each geography, we first enumerate all the \emph{subgroups} $\{G'\}$ in the ground group $G$.  
Subsequently, we apply \emph{orbit decomposition} for each subgroup $G'$, which is a partitioning of the set of locations $[K]$ into \emph{equivalence class} defined by the \emph{action} of $G'$ (the permutations of zone indices induced by $G'$). 
The support $\operatornamewithlimits{supp}(\BrVtx^*)$ of each invariant equilibria $\BrVtx^*$ corresponds to one of the partitioned components of $[K]$. 
See \cite{Ikeda-etal-JEDC2018} and \cite{Ikeda-etal-IJBC2019} for group-theoretic foundations of invariant equilibria in square and triangular geographies, respectively, with an arbitrary number of locations.

\begin{figure}
\centering
\begin{minipage}[tb]{.95\linewidth}
\centering
\includegraphics[width=1.25cm]{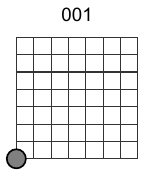}
\includegraphics[width=1.25cm]{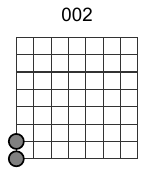}
\includegraphics[width=1.25cm]{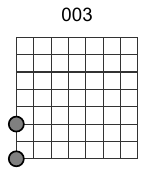}
\includegraphics[width=1.25cm]{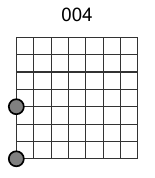}
\includegraphics[width=1.25cm]{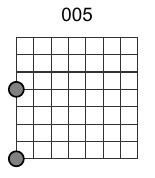}
\includegraphics[width=1.25cm]{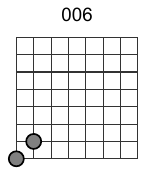}
\includegraphics[width=1.25cm]{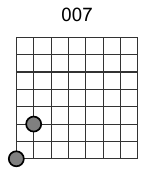}
\includegraphics[width=1.25cm]{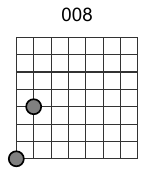}
\includegraphics[width=1.25cm]{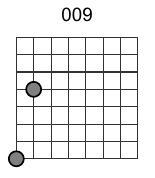}
\includegraphics[width=1.25cm]{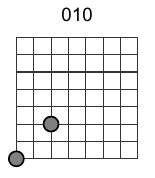}

\includegraphics[width=1.25cm]{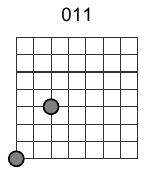}
\includegraphics[width=1.25cm]{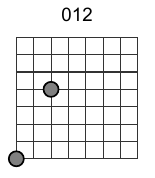}
\includegraphics[width=1.25cm]{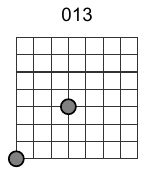}
\includegraphics[width=1.25cm]{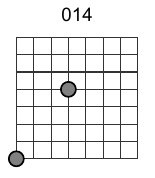}
\includegraphics[width=1.25cm]{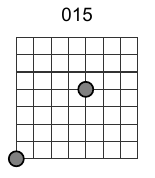}
\includegraphics[width=1.25cm]{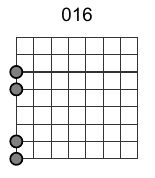}
\includegraphics[width=1.25cm]{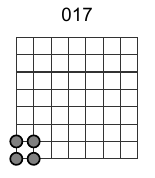}
\includegraphics[width=1.25cm]{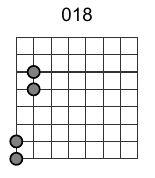}
\includegraphics[width=1.25cm]{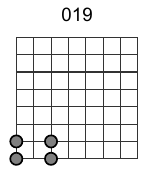}
\includegraphics[width=1.25cm]{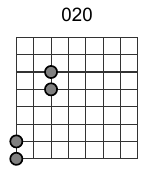}

\includegraphics[width=1.25cm]{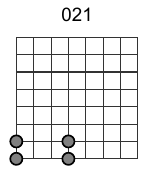}
\includegraphics[width=1.25cm]{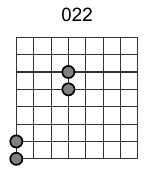}
\includegraphics[width=1.25cm]{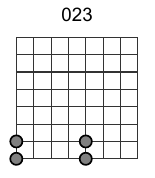}
\includegraphics[width=1.25cm]{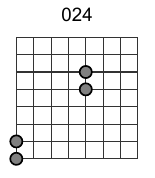}
\includegraphics[width=1.25cm]{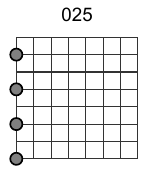}
\includegraphics[width=1.25cm]{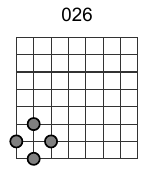}
\includegraphics[width=1.25cm]{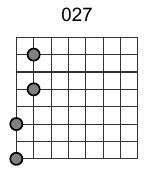}
\includegraphics[width=1.25cm]{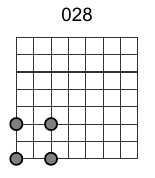}
\includegraphics[width=1.25cm]{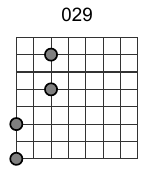}
\includegraphics[width=1.25cm]{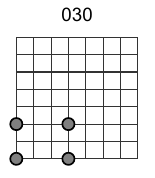}

\includegraphics[width=1.25cm]{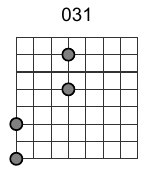}
\includegraphics[width=1.25cm]{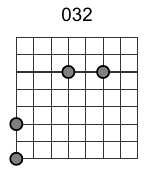}
\includegraphics[width=1.25cm]{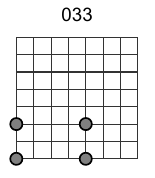}
\includegraphics[width=1.25cm]{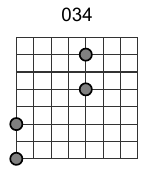}
\includegraphics[width=1.25cm]{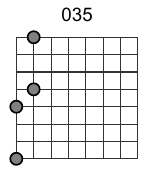}
\includegraphics[width=1.25cm]{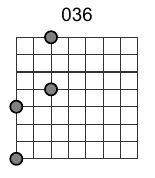}
\includegraphics[width=1.25cm]{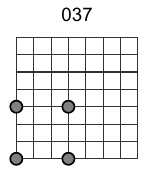}
\includegraphics[width=1.25cm]{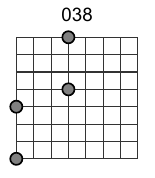}
\includegraphics[width=1.25cm]{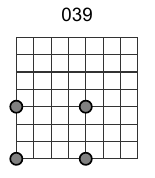}
\includegraphics[width=1.25cm]{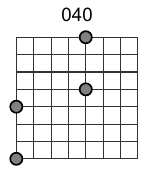}

\includegraphics[width=1.25cm]{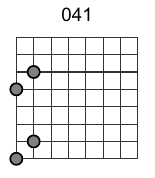}
\includegraphics[width=1.25cm]{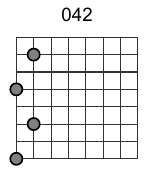}
\includegraphics[width=1.25cm]{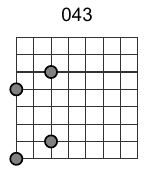}
\includegraphics[width=1.25cm]{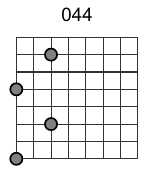}
\includegraphics[width=1.25cm]{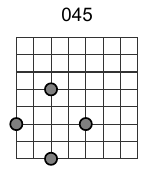}
\includegraphics[width=1.25cm]{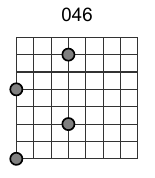}
\includegraphics[width=1.25cm]{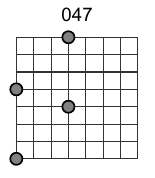}
\includegraphics[width=1.25cm]{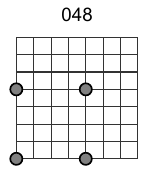}
\includegraphics[width=1.25cm]{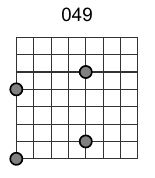}
\includegraphics[width=1.25cm]{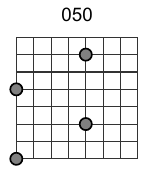}

\includegraphics[width=1.25cm]{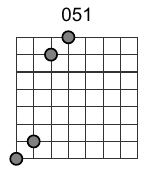}
\includegraphics[width=1.25cm]{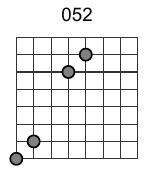}
\includegraphics[width=1.25cm]{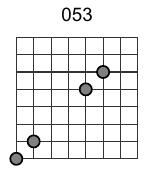}
\includegraphics[width=1.25cm]{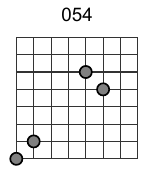}
\includegraphics[width=1.25cm]{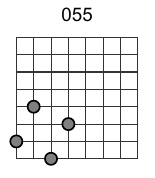}
\includegraphics[width=1.25cm]{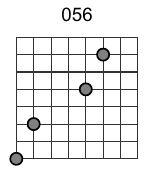}
\includegraphics[width=1.25cm]{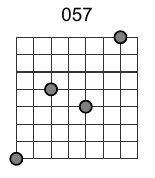}
\includegraphics[width=1.25cm]{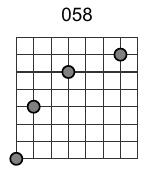}
\includegraphics[width=1.25cm]{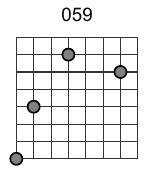}
\includegraphics[width=1.25cm]{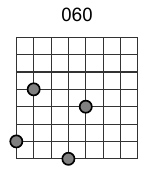}

\includegraphics[width=1.25cm]{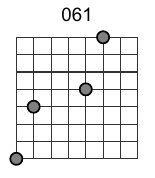}
\includegraphics[width=1.25cm]{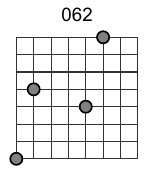}
\includegraphics[width=1.25cm]{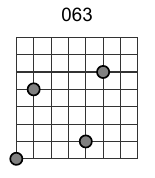}
\includegraphics[width=1.25cm]{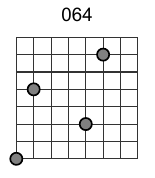}
\includegraphics[width=1.25cm]{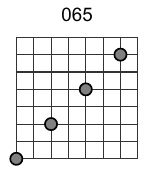}
\includegraphics[width=1.25cm]{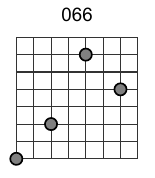}
\includegraphics[width=1.25cm]{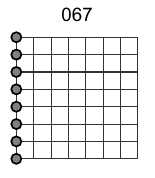}
\includegraphics[width=1.25cm]{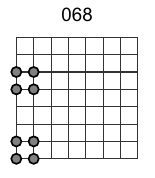}
\includegraphics[width=1.25cm]{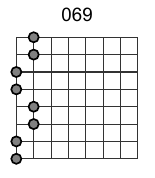}
\includegraphics[width=1.25cm]{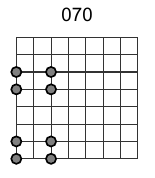}

\includegraphics[width=1.25cm]{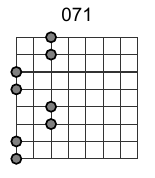}
\includegraphics[width=1.25cm]{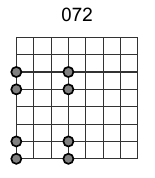}
\includegraphics[width=1.25cm]{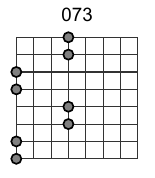}
\includegraphics[width=1.25cm]{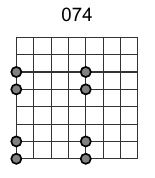}
\includegraphics[width=1.25cm]{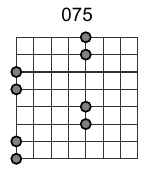}
\includegraphics[width=1.25cm]{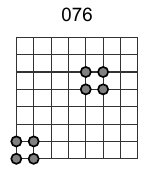}
\includegraphics[width=1.25cm]{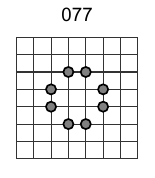}
\includegraphics[width=1.25cm]{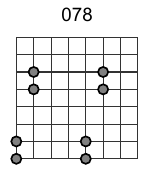}
\includegraphics[width=1.25cm]{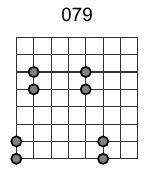}
\includegraphics[width=1.25cm]{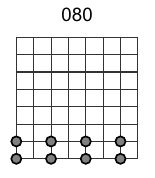}

\includegraphics[width=1.25cm]{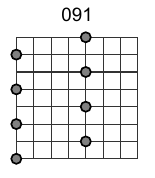}
\includegraphics[width=1.25cm]{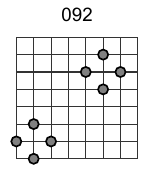}
\includegraphics[width=1.25cm]{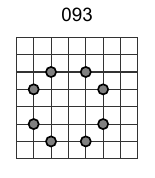}\includegraphics[width=1.25cm]{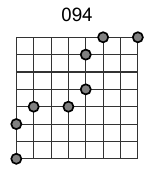}
\includegraphics[width=1.25cm]{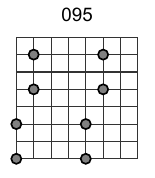}
\includegraphics[width=1.25cm]{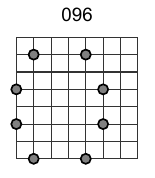}
\includegraphics[width=1.25cm]{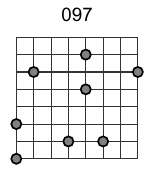}
\includegraphics[width=1.25cm]{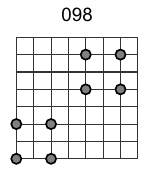}
\includegraphics[width=1.25cm]{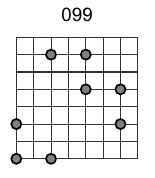}
\includegraphics[width=1.25cm]{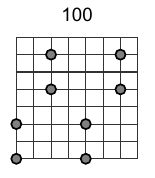}

\includegraphics[width=1.25cm]{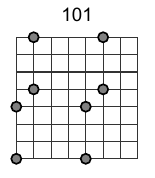}
\includegraphics[width=1.25cm]{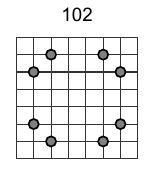}
\includegraphics[width=1.25cm]{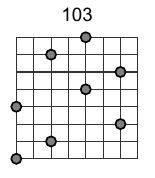}\includegraphics[width=1.25cm]{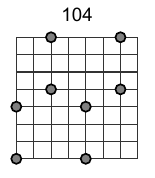}
\includegraphics[width=1.25cm]{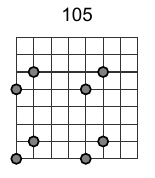}
\includegraphics[width=1.25cm]{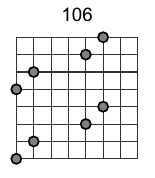}
\includegraphics[width=1.25cm]{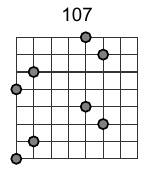}
\includegraphics[width=1.25cm]{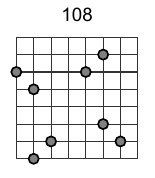}
\includegraphics[width=1.25cm]{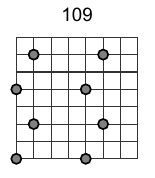}
\includegraphics[width=1.25cm]{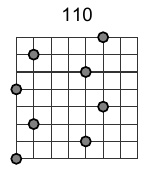}

\includegraphics[width=1.25cm]{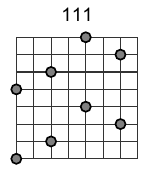}
\includegraphics[width=1.25cm]{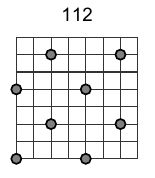}
\includegraphics[width=1.25cm]{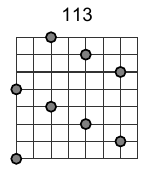}\includegraphics[width=1.25cm]{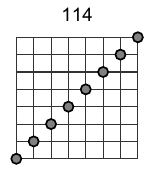}
\includegraphics[width=1.25cm]{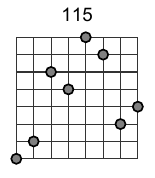}
\includegraphics[width=1.25cm]{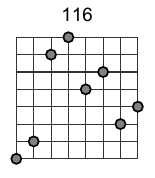}
\includegraphics[width=1.25cm]{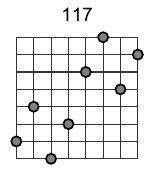}
\includegraphics[width=1.25cm]{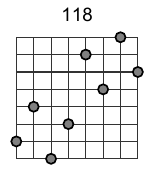}
\includegraphics[width=1.25cm]{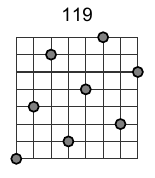}
\includegraphics[width=1.25cm]{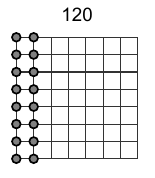}

\includegraphics[width=1.25cm]{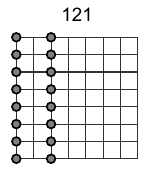}
\includegraphics[width=1.25cm]{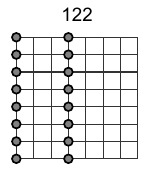}
\includegraphics[width=1.25cm]{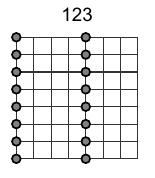}\includegraphics[width=1.25cm]{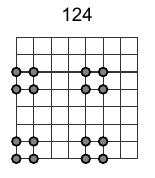}
\includegraphics[width=1.25cm]{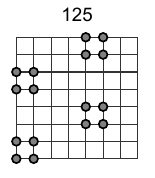}
\includegraphics[width=1.25cm]{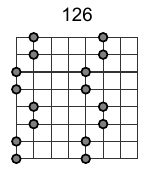}
\includegraphics[width=1.25cm]{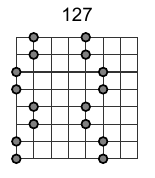}
\includegraphics[width=1.25cm]{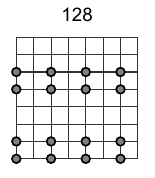}
\includegraphics[width=1.25cm]{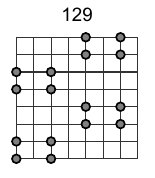}
\includegraphics[width=1.25cm]{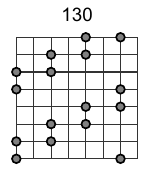}

\includegraphics[width=1.25cm]{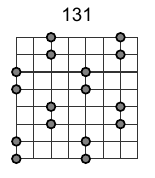}
\includegraphics[width=1.25cm]{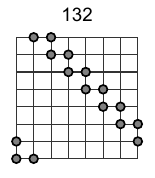}
\includegraphics[width=1.25cm]{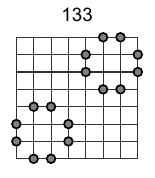}\includegraphics[width=1.25cm]{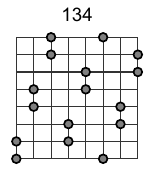}
\includegraphics[width=1.25cm]{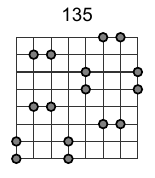}
\includegraphics[width=1.25cm]{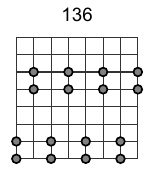}
\includegraphics[width=1.25cm]{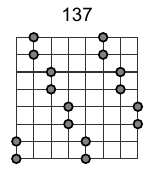}
\includegraphics[width=1.25cm]{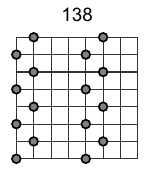}
\includegraphics[width=1.25cm]{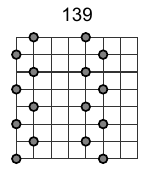}
\includegraphics[width=1.25cm]{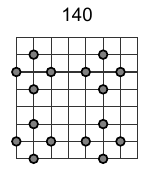}

\includegraphics[width=1.25cm]{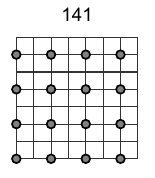}
\includegraphics[width=1.25cm]{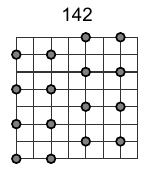}
\includegraphics[width=1.25cm]{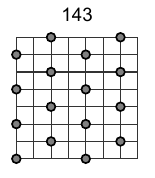}\includegraphics[width=1.25cm]{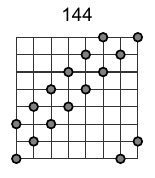}
\includegraphics[width=1.25cm]{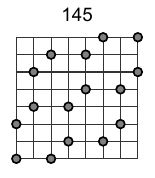}
\includegraphics[width=1.25cm]{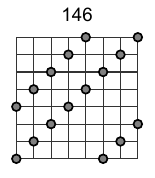}
\includegraphics[width=1.25cm]{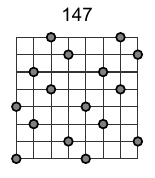}
\includegraphics[width=1.25cm]{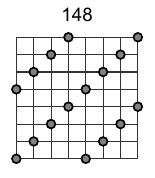}
\includegraphics[width=1.25cm]{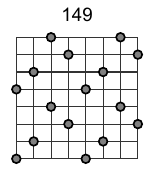}
\includegraphics[width=1.25cm]{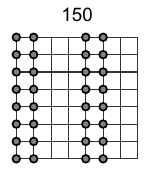}

\includegraphics[width=1.25cm]{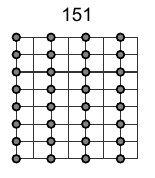}
\includegraphics[width=1.25cm]{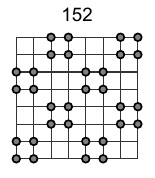}
\includegraphics[width=1.25cm]{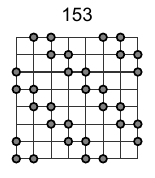}\includegraphics[width=1.25cm]{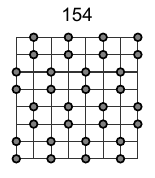}
\includegraphics[width=1.25cm]{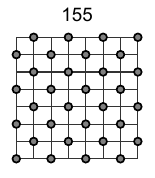}
\includegraphics[width=1.25cm]{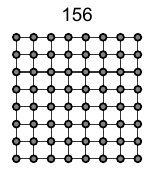}\!\phantom{\includegraphics[width=1.25cm]{fig/figs8/sq/147.pdf}
\includegraphics[width=1.25cm]{fig/figs8/sq/148.pdf}
\includegraphics[width=1.25cm]{fig/figs8/sq/149.pdf}
\includegraphics[width=1.25cm]{fig/figs8/sq/140.pdf}}
 \end{minipage}
\caption{The invariant equilibria for a square grid economy with $8\times 8$ locations.}
\label{fig:invs_sq_8}
\end{figure}

\begin{figure}
\centering

\includegraphics[width=2.2cm]{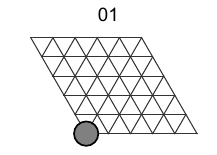}
\includegraphics[width=2.2cm]{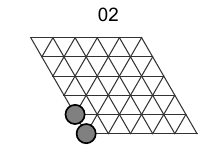}
\includegraphics[width=2.2cm]{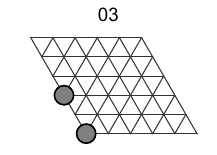}
\includegraphics[width=2.2cm]{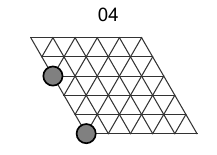}
\includegraphics[width=2.2cm]{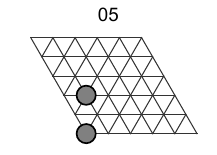}
\includegraphics[width=2.2cm]{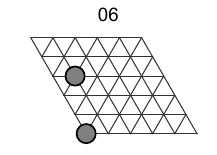}
\includegraphics[width=2.2cm]{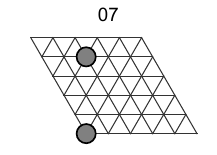}

\includegraphics[width=2.2cm]{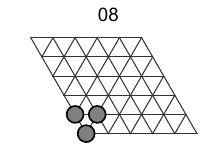}
\includegraphics[width=2.2cm]{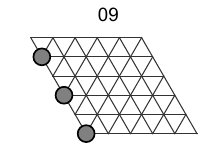}
\includegraphics[width=2.2cm]{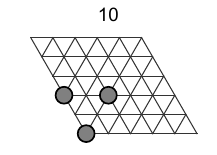}
\includegraphics[width=2.2cm]{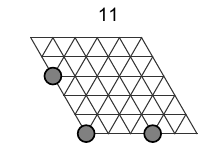}
\includegraphics[width=2.2cm]{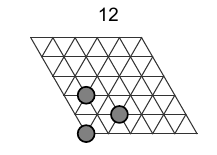}
\includegraphics[width=2.2cm]{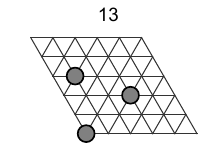}
\includegraphics[width=2.2cm]{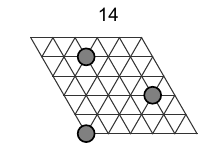}

\includegraphics[width=2.2cm]{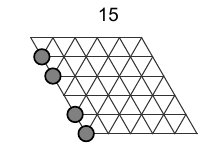}
\includegraphics[width=2.2cm]{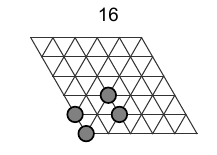}
\includegraphics[width=2.2cm]{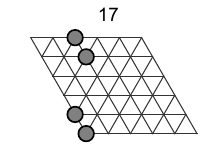}
\includegraphics[width=2.2cm]{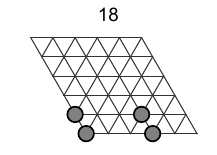}
\includegraphics[width=2.2cm]{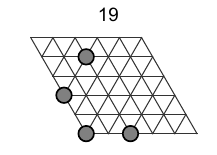}
\includegraphics[width=2.2cm]{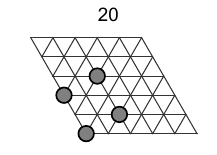}
\includegraphics[width=2.2cm]{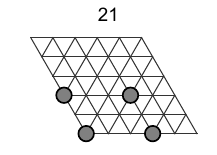}

\includegraphics[width=2.2cm]{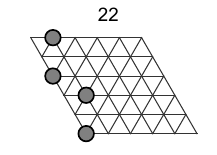}
\includegraphics[width=2.2cm]{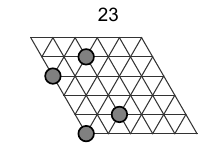}
\includegraphics[width=2.2cm]{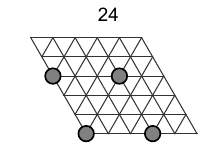}
\includegraphics[width=2.2cm]{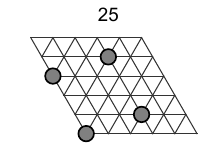}
\includegraphics[width=2.2cm]{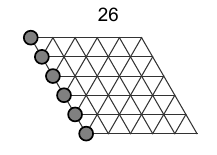}
\includegraphics[width=2.2cm]{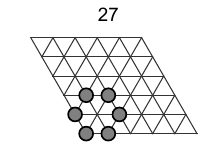}
\includegraphics[width=2.2cm]{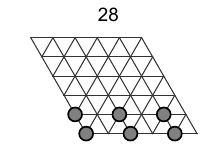}

\includegraphics[width=2.2cm]{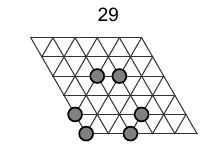}
\includegraphics[width=2.2cm]{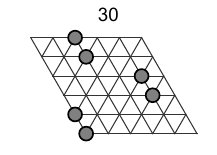}
\includegraphics[width=2.2cm]{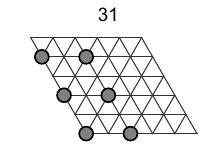}
\includegraphics[width=2.2cm]{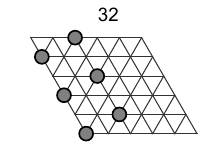}
\includegraphics[width=2.2cm]{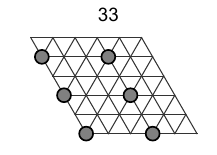}
\includegraphics[width=2.2cm]{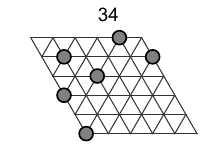}
\includegraphics[width=2.2cm]{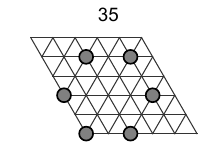}

\includegraphics[width=2.2cm]{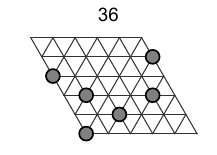}
\includegraphics[width=2.2cm]{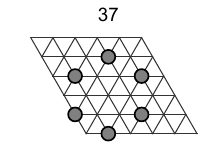}
\includegraphics[width=2.2cm]{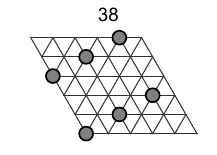}
\includegraphics[width=2.2cm]{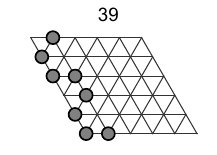}
\includegraphics[width=2.2cm]{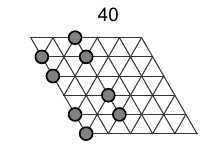}
\includegraphics[width=2.2cm]{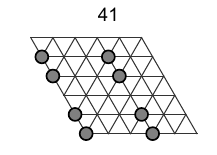}
\includegraphics[width=2.2cm]{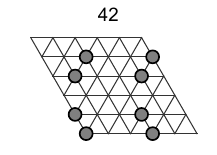}

\includegraphics[width=2.2cm]{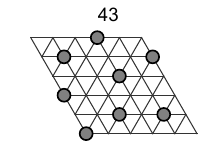}
\includegraphics[width=2.2cm]{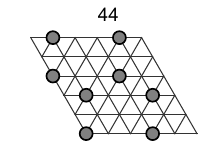}
\includegraphics[width=2.2cm]{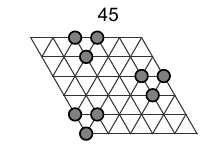}
\includegraphics[width=2.2cm]{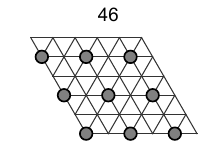}
\includegraphics[width=2.2cm]{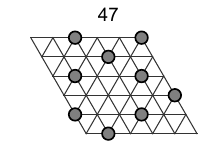}
\includegraphics[width=2.2cm]{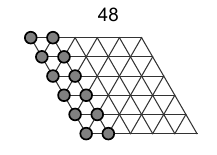}
\includegraphics[width=2.2cm]{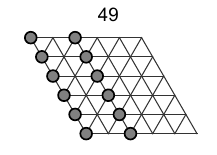}

\includegraphics[width=2.2cm]{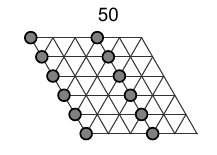}
\includegraphics[width=2.2cm]{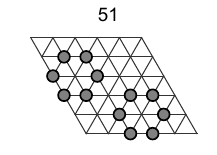}
\includegraphics[width=2.2cm]{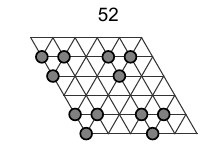}
\includegraphics[width=2.2cm]{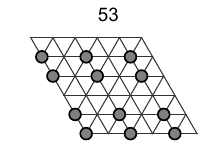}
\includegraphics[width=2.2cm]{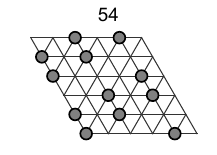}
\includegraphics[width=2.2cm]{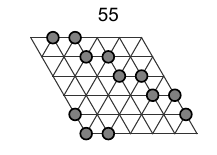}
\includegraphics[width=2.2cm]{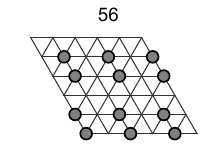}

\includegraphics[width=2.2cm]{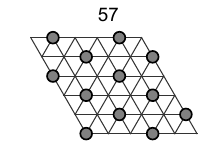}
\includegraphics[width=2.2cm]{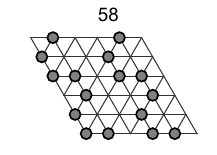}
\includegraphics[width=2.2cm]{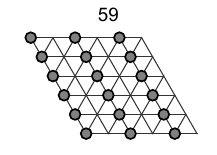}
\includegraphics[width=2.2cm]{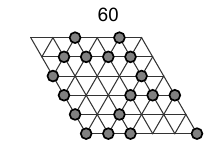}
\includegraphics[width=2.2cm]{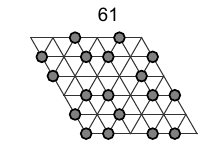}
\includegraphics[width=2.2cm]{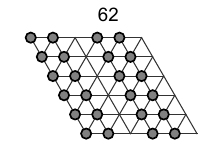}
\includegraphics[width=2.2cm]{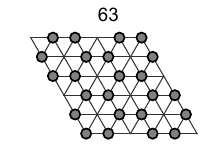}

\includegraphics[width=2.2cm]{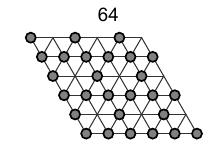}
\includegraphics[width=2.2cm]{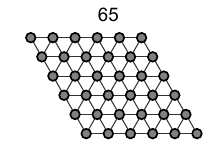}\!
\phantom{\includegraphics[width=2.2cm]{fig/hex/59.pdf}
\includegraphics[width=2.2cm]{fig/hex/60.pdf}
\includegraphics[width=2.2cm]{fig/hex/61.pdf}
\includegraphics[width=2.2cm]{fig/hex/62.pdf}
\includegraphics[width=2.2cm]{fig/hex/63.pdf}}
 \caption{The invariant equilibria for a triangular grid economy with $6\times 6$ locations.}
\label{fig:invs_hx}
\end{figure}

\clearpage 

{
\small
\singlespacing
\bibliographystyle{apalike}
\bibliography{refs}
}

\end{document}